\documentclass[USenglish, 11pt]{article}
\usepackage[T1]{fontenc}
\usepackage{amsthm,amsmath,amssymb,amsfonts,authblk,hyperref,setspace}
\usepackage{graphicx}
\usepackage{color}
\usepackage{authblk}
%

\usepackage{cite}
\usepackage{stfloats}
\usepackage{url}

\usepackage{float}
\usepackage{tikz}



\newtheorem{theorem}{Theorem}[section]

\newtheorem{lemma}[theorem]{Lemma}
\newtheorem{claim}[theorem]{Claim}
\newtheorem{proposition}[theorem]{Proposition}

\usepackage{fullpage}
   
\usepackage[]{algorithm2e}
\SetKwInOut{Input}{Input}
\SetKwInOut{Output}{Output}

\usepackage{float}
\usepackage{tikz}
\newcommand{\comment}[1]{}
\newcommand{\tildeO}{\widetilde{O}}
\newcommand{\tO}{\widetilde{O}}

\newcommand{\N}{\mathbb{N} \cup \{0\}}

\newcommand{\kt}{\tilde{k}}
\newcommand{\editd}{\mathbf{d}_\textrm{edit}}

\newcommand{\ed}{\Delta_\textrm{edit}}
\newcommand{\cost}{\textrm{cost}}

\newcommand{\source}{\textrm{source}}
\newcommand{\sink}{\textrm{sink}}

\newcommand{\poly}{\textrm{poly}}



\bibliographystyle{plainurl}

\title{Approximate Online Pattern Matching in Sub-linear Time\footnote{The research leading to these results has received funding from the European Research Council under the European Union's Seventh Framework Programme (FP/2007-2013)/ERC Grant Agreement no. 616787.}}

\author[1]{Diptarka Chakraborty\thanks{diptarka@iuuk.mff.cuni.cz}}
\author[2]{Debarati Das\thanks{debaratix710@gmail.com}}
\author[3]{Michal Kouck{\'{y}}\thanks{koucky@iuuk.mff.cuni.cz}}
\affil[1]{Faculty of Mathematics and Computer Science,
The Weizmann Institute of Science,
Rehovot, Israel}
\affil[2,3]{Computer Science Institute of Charles University,
Malostransk{\'e}  n{\'a}m\v{e}st\'{\i} 25,
118 00 Praha 1, Czech Republic}

\begin{document}

\maketitle

\begin{abstract}
	We consider the approximate pattern matching problem under edit distance. In this problem we are given a pattern $P$ of length $w$ and a text $T$ of length $n$ over some alphabet $\Sigma$, and a positive integer $k$. The goal is to find all the positions $j$ in $T$ such that there is a substring of $T$ ending at $j$ which has edit distance at most $k$ from the pattern $P$. Recall, the edit distance between two strings is the minimum number of character insertions, deletions, and substitutions required to transform one string into the other. For a position $t$ in $\{1,...,n\}$, let $k_t$ be the smallest edit distance between $P$ and any substring of $T$ ending at $t$. In this paper we give a constant factor approximation to the sequence $k_1,k_2,...,k_{n}$. We consider both offline and online settings.

In the offline setting, where both $P$ and $T$ are available, we present an algorithm that for all $t$ in $\{1,...,n\}$, computes the value of $k_t$ approximately within a constant factor. The worst case running time of our algorithm is $O(n w^{3/4})$. As a consequence we break the $O(nw)$-time barrier for this problem.

In the online setting, we are given $P$ and then $T$ arrives one symbol at a time. We design an algorithm that upon arrival of the $t$-th symbol of $T$ computes $k_t$ approximately within $O(1)$-multiplicative factor and $w^{8/9}$-additive error. Our algorithm takes $O(w^{1-(7/54)})$ amortized time per symbol arrival and takes $O(w^{1-(1/54)})$ additional space apart from storing the pattern $P$.

Both of our algorithms are randomized and produce correct answer with high probability. To the best of our knowledge this is the first worst-case sub-linear (in the length of the pattern) time and sub-linear succinct space algorithm for online approximate pattern matching problem. To get our result we build on the technique of Chakraborty, Das, Goldenberg, Kouck\'y and Saks (appeared in FOCS'18) for computing a constant factor 
approximation of edit distance in sub-quadratic time.
\end{abstract}

 \section{Introduction}
Finding the occurrences of a pattern in a larger text is one of the fundamental problems in computer science. Due to its immense applications this problem has been studied extensively under several variations~\cite{KMP77, GS81, Abr87, Cro92, GPR95, CGPR95, Indyk98, Nav01, KP18}. One of the most natural variations is where we are allowed to have a small number of errors while matching the pattern. This problem of pattern matching while allowing errors is known as \emph{approximate pattern matching}. The kind of possible errors varies with the applications. Generally we capture the amount of errors by the distance metric defined over the set of strings. One common and widely used distance measure is the edit distance (aka \emph{Levenshtein distance})~\cite{Lev65}. The edit distance between two strings $T$ and $P$ denoted by $\editd(T,P)$ is the minimum number of character insertions, deletions, and substitutions required to transform one string into the other. In this paper we focus on the approximate pattern matching problem under edit distance. This problem has various applications ranging from computational biology, signal transmission, web searching, text processing to many more.

Given a pattern $P$ of length $w$ and a text $T$ of length $n$ over some alphabet $\Sigma$, and an integer $k$ we want to identify all the substrings of $T$ at edit distance at most $k$ from $P$. As the number of such substrings might be quadratic in $n$ and one wants to obtain efficient algorithms one focuses on finding the set of all right-end positions in $T$ of those substrings at distance at most $k$. More specifically, for a position $t$ in $T$, we let $k_t$ be the smallest edit distance of a substring of $T$ ending at $t$-th position in $T$. (We number positions in $T$ and $P$ from $1$.) The goal is to compute the sequence $k_1,k_1,\dots,k_{n}$ for $P$ and $T$. Using basic dynamic programming paradigm we can solve this problem in $O(nw)$ time~\cite{SEL80}. Later Masek and Paterson~\cite{MP80} shaved a $\log n$ factor from the above running time bound. Despite of a long line of research, this running time remains the best till now. Recently, Backurs and Indyk~\cite{BI15} indicate that this $O(nw)$ bound cannot be improved significantly unless the Strong Exponential Time Hypothesis (SETH) is false. Moreover Abboud et al.~\cite{AHWW16} showed that even shaving an arbitrarily large polylog factor would imply that NEXP does not have non-uniform ${NC}^1$ circuits which is likely but hard to prove conclusion. More hardness results can be found in ~\cite{ABW15,BK15, AB17, AR18}.

In this paper we focus on finding an approximation to the sequence $k_1,k_1,\dots,k_{n}$ for $P$ and $T$. For reals $c,k\ge 0$, a sequence $\kt_1,\dots,\kt_{n}$ is
$(c,k)$-approximation to $k_1,\dots,k_{n}$, if for each $t\in \{1,\dots,n\}$, $k_t \le \kt_t \le c \cdot k_t + k$. 
Hence, $c$ is the multiplicative error and $k$ is the additive error of the approximation.
An algorithm computes $(c,k)$-approximation to approximate pattern matching if it outputs a $(c,k)$-approximation of the true sequence $k_1,k_1,\dots,k_{n}$ for $P$ and $T$. 
We refer $(c,0)$-approximation simply as $c$-approximation. Our main theorem is the following.


\begin{theorem}
\label{thm:main-offline}
There is a constant $c\ge 1$ and there is a randomized algorithm that computes $c$-approximation to approximate pattern matching in time $O(n\cdot w^{3/4})$ with probability at least $(1-1/n^3)$.
\end{theorem}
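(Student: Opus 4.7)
The plan is to reduce the pattern matching problem to many independent window-local sub-problems, each of which I handle with a sampling-based algorithm in the style of Chakraborty--Das--Goldenberg--Kouck\'y--Saks (CDGKS'18). First I would partition the text $T$ into $O(n/w)$ overlapping windows of length $\Theta(w)$, arranged so that for every right endpoint $t$ there is a window whose designated central interval of length $\Theta(w)$ contains $t$ and whose full extent contains every substring of $T$ of length at most $2w$ ending at $t$. This suffices because $k_t \le w$ always (align the empty substring with $P$), so any witness to $k_t$ has length in $[0,2w]$. Thus it is enough to solve the following window-local problem: given $P$ of length $w$ and a window $W$ of length $O(w)$, output a $c$-approximation of $k_t$ for every $t$ in the central interval of $W$ in total time $O(w^{7/4})$; summed over all $O(n/w)$ windows this yields $O(n \cdot w^{3/4})$.

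Inside one window I would do an exponential search over the target distance: for each $k \in \{1, 2, 4, \ldots, w\}$, decide for every endpoint $t$ in the central interval whether $k_t \le k$ or $k_t > c \cdot k$. Taking, for each $t$, the smallest $k$ for which the first option fires gives the claimed $c$-approximation with only an $O(\log w)$ overhead, leaving a per-threshold budget of $\tO(w^{7/4})$.

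For a fixed threshold $k$, I would adapt the sampling machinery of CDGKS'18. Roughly, I sample a collection of short substrings (``blocks'') of $P$, with block length and sample count chosen to fit the $\tO(w^{7/4})$ per-threshold budget; for each sampled block I find all approximate occurrences in $W$ via a hashing-based near-match subroutine; and then I build a directed graph whose vertices are these occurrences and whose edges link matches with compatible orderings and offsets, so that a ``long'' path corresponds to an alignment of nearly all of $P$ into $W$ incurring only $O(k)$ edits. The right endpoints of long paths then identify all $t$ with $k_t \le c \cdot k$. Correctness follows the standard CDGKS'18 argument: if a true alignment of cost at most $k$ exists, the random sample hits enough well-aligned blocks to yield a long path; otherwise no such path arises. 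Boosting the success probability with standard repetition and union-bounding over the $O(n)$ endpoints delivers the $1 - 1/n^3$ guarantee.

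The main obstacle is lifting the CDGKS'18 approach from the two-string decision problem it was designed for to a pattern-matching setting where the left endpoint of the candidate substring of $T$ is not fixed and we must simultaneously report values for all right endpoints in a window. The graph traversal must reveal \emph{every} good endpoint in $\tO(w^{7/4})$ aggregate time rather than $\Omega(w)$ per endpoint, and the approximation guarantee must survive composition across thresholds and across overlapping windows. Carefully balancing block length, sample count, and near-match tolerance so that the per-block hashing cost, the graph-construction cost, and the approximation factor all fit together is the delicate part of the proof and is what forces the $w^{3/4}$ exponent.
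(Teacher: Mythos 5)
Your high-level template is in the right neighborhood --- decompose $T$ into $\Theta(w)$-windows, spend $\tO(w^{7/4})$ per window on a CDGKS'18-style sampling procedure, and sum to $\tO(n w^{3/4})$. But there are two genuine gaps that your sketch does not resolve.

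First, you never confront the small-$k$ regime. Any CDGKS-style covering procedure inherently produces a $(O(1),O(\theta w))$-approximation, i.e.\ it carries an unavoidable additive error of roughly $w^{3/4}$ with the parameters that make the per-window budget $\tO(w^{7/4})$. That additive error is fatal whenever $k_t \ll w^{3/4}$: no exponential search over thresholds will salvage it, because the sampling machinery simply cannot distinguish $k_t = 0$ from $k_t = w^{3/4}$ within the time budget. The paper's fix is to run Galil--Giancarlo's exact $O(kn)$-time algorithm for every $k = 2^j \le w^{3/4}$ (total $\tO(n w^{3/4})$), which handles the small regime exactly, and then invoke the covering-and-min-cost-path machinery only for $k_t > w^{3/4}$, where the additive error becomes a constant-factor multiplicative loss. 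Without this split your claim that the per-threshold decision gives a multiplicative $c$-approximation for all $k$ is unsupported.

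Second, your account of the per-threshold decision (``decide for every endpoint whether $k_t \le k$ or $k_t > ck$ by sampling blocks, hashing for near-matches, building an occurrence graph and looking for long paths'') is not actually CDGKS'18's structure, nor does it obviously give a decision for \emph{all} $t$ in a window in $\tO(w^{7/4})$ aggregate time. The paper instead runs the covering algorithm \emph{once} across all scales $\epsilon_j = 2^{-j}$ simultaneously, producing a set of \emph{certified boxes} (subrectangles of the edit-distance grid with verified upper bounds on their cost) using Galil--Giancarlo and Ukkonen rather than hashing, and then runs a \emph{single} min-cost-path sweep over a grid graph augmented with shortcut edges derived from those boxes. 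It is precisely this sweep --- a data structure answering shortest-path queries to $(t,w)$ for all $t$ in one pass --- that amortizes to $\tO(n w^{3/4})$ over all endpoints, and it is this step that lets you read off a value for every $t$ rather than a yes/no at one threshold. You also need the covering phase to search the \emph{entire} $w \times w$ grid (not just a diagonal strip as in CDGKS'18), since different $t$'s correspond to different ``main'' diagonals; your proposal does not mention this adaptation. Finally, you flag ``balancing block length, sample count, and tolerance'' as the delicate part and then leave it entirely open --- but that balance ($w_1 = w^{1/4}$, $w_2 = w^{1/2}$, $d = w^{1/4}$, $\theta = w^{-1/4}$) is exactly where the $w^{3/4}$ exponent is earned, so a proof that omits it has not established the time bound.
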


In the recent past researchers also studied the approximate pattern matching problem in the online setting. The online version of this pattern matching problem mostly arises in real life applications that require matching pattern in a massive data set, like in telecommunications, monitoring Internet traffic, building firewall to block viruses and malware connections and many more. The online approximate pattern matching is as follows: we are given a pattern $P$ first and then the text $T$ is coming symbol by symbol. Upon receipt of the $t$-th symbol we should output the corresponding $k_t$. The online algorithm runs in {\em amortized time} $O(\ell)$ if it runs in total time $O(n \cdot \ell)$ and it uses {\em succinct space} $O(s)$ if in addition to storing $P$ it uses at most $O(s)$ cells of memory at any time.

\begin{theorem}
\label{thm:main-online}
There is a constant $c\ge 1$ so that there is a randomized online algorithm that computes $(c,w^{8/9})$-approximation to approximate pattern matching in amortized time $O(w^{1-(7/54)})$ and succinct space $O(w^{1-(1/54)})$ with probability at least $1-1/\poly(n)$.
\end{theorem}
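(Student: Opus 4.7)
The plan is to amortize the offline algorithm of Theorem~\ref{thm:main-offline} over batches of arriving text symbols, exploiting the $w^{8/9}$ additive slack. First I would establish the Lipschitz property $|k_{t+1} - k_t| \le 1$: any substring $T[i..t]$ at edit distance $d$ from $P$ extends to $T[i..t+1]$ at edit distance at most $d+1$, and conversely any optimal match ending at $t+1$ truncates to a substring ending at $t$ at edit distance at most one larger. Iterating gives $|k_t - k_s| \le |t-s|$ for all $s,t$.

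Fix a batch size $\Delta = \lceil w^{8/9}/(c+1)\rceil$, where $c$ is the multiplicative constant from Theorem~\ref{thm:main-offline}. Whenever the text cursor crosses a batch boundary $s$, I would invoke a variant of the offline algorithm on the window $T[s-3w\,..\,s]$ (of length $\Theta(w)$, which captures every substring ending near $s$ that could possibly match $P$ within the trivial bound $k \le w$) to obtain a $c$-approximation $\kt_s$ of $k_s$. For every position $t$ in the current batch, the algorithm outputs $\kt_s + \Delta$, where $s$ is the most recent batch boundary with $s \le t$. The Lipschitz bound yields $k_t \le k_s + \Delta \le \kt_s + \Delta$ for the lower inequality, and $\kt_s + \Delta \le c\,k_s + \Delta \le c\,k_t + (c+1)\Delta \le c\,k_t + w^{8/9}$ for the upper inequality, which is exactly the $(c, w^{8/9})$-approximation guarantee. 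The failure probabilities over the $n/\Delta$ batches are absorbed by a union bound, preserving the $1-1/\poly(n)$ success bound.

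A naive time accounting gives $O(w\cdot w^{3/4}) = O(w^{7/4})$ per batch and therefore $O(w^{7/4-8/9}) = O(w^{31/36})$ amortized per symbol, which already sits below the target $O(w^{1-7/54})$. The real challenge is \emph{space}: a black-box call to the offline algorithm uses $\Omega(w)$ memory for its internal structures (coarse DP tables, block sketches, and other data inherited from the CDGKS framework), whereas the theorem demands $O(w^{1-1/54})$ auxiliary memory on top of $P$. Note that buffering the $\Delta$ symbols of a batch is free, since $\Delta = w^{8/9} \ll w^{1-1/54}$; the whole difficulty lies in shrinking the offline subroutine's internal footprint.

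I would overcome this obstacle by opening up the offline algorithm and re-implementing it in a block-streaming fashion. Partition $P$ into $\Theta(w/s)$ consecutive groups of blocks for a parameter $s \le w^{1-1/54}$; for each batch, load one group of $P$ at a time, run the corresponding sampling/matching primitives of the CDGKS framework against the buffered text window, update a compact summary of coarse-DP contributions, and then discard the group-level state before loading the next group. Because dependencies in the coarse DP cross pattern groups, some intermediate information (for instance, sketches of frequently-used block alignments) must be recomputed, inflating the per-batch running time by roughly a $w/s$ factor. Balancing this blow-up against $\Delta$ and $s$ yields the stated exponents $1-7/54$ (time) and $1-1/54$ (space); I expect that identifying the precise decomposition of the CDGKS state that admits such localized recomputation, and verifying that the approximation guarantee survives it, will be the main technical work.
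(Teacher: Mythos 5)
Your Lipschitz + batching idea is a valid and clean way to reduce the online time bound to the offline one: the inequality $|k_{t+1}-k_t|\le 1$ is correct, your $(c,w^{8/9})$ arithmetic is right (up to the constant lost to the ceiling), and your amortized-time accounting $O(w^{7/4}/w^{8/9})=O(w^{31/36})$ indeed undercuts the stated $O(w^{1-7/54})=O(w^{47/54})$. The paper does \emph{not} use this observation; it instead processes the text in batches of $w_2=w^{20/27}$ symbols and reuses the previous output within a batch, obtaining the $w^{8/9}$ additive slack from setting $\theta=w^{-1/9}$ in the covering phase rather than from a Lipschitz argument. So far your route is genuinely different and, on time alone, at least as good.

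However, there is a fundamental gap on the side you yourself flag as ``the real challenge,'' and the fix you sketch does not address it. Your algorithm needs the $\Theta(w)$-symbol text window $T[s-3w,\ldots,s]$ to be available to the offline subroutine, but that window alone costs $\Theta(w)$ working memory, already violating the $O(w^{1-1/54})$ bound before any internal state of the CDGKS machinery comes into play. Your proposed remedy, streaming $P$ in groups of blocks, only reduces the footprint contributed by the \emph{pattern}, which the model lets you store for free anyway; it leaves the $\Theta(w)$ text buffer untouched, and since the text arrives online you cannot re-read it. The paper avoids this entirely by never forming a long text window: the covering phase is run incrementally against the fully stored $P$ on each fresh $w_2$-symbol batch, carrying across batches only the small sets $D'_j$ and $Y_{i,j}$ of dense block representatives (whose size is bounded via the density argument), and the min-cost-path tree is pruned to the $\tildeO(w/(\theta w_1))$ vertical coordinates where shortcut edges can originate or terminate, so that the total auxiliary space is $\tildeO(w_2+\frac{w}{d\theta}+\frac{w}{w_1\theta}+\frac{w^2}{\theta^2 w_1^2 d}+d)=\tildeO(w^{53/54})$. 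That interleaved, state-compressed restructuring is the actual content of the theorem, and your proposal does not supply it --- it names the obstacle and defers the construction to future work, which leaves the claimed space bound unproven.
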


To the best of our knowledge this is the first online approximation algorithm that takes sublinear (in the length of the pattern) running time and sublinear succinct space for the approximate pattern matching problem. The succinct space data structure is quite natural from the practical point of view and has been considered for many problems including pattern matching, e.g.~\cite{Pat08, GST17}. 

To prove our result we use the technique developed by Chakraborty, Das, Goldenberg, Kouck{\'{y}} and Saks in~\cite{CDGKS18}, where they provide a sub-quadratic time constant factor approximation algorithm for the edit distance problem. Suppose one has only a black-box access to a sub-quadratic time approximation algorithm for computing the edit distance. 
It is not clear how to use that algorithm to design an algorithm for the offline approximate pattern matching problem that runs in $O(nw^{1-\epsilon})$ time, for some $\epsilon > 0$. 
So even given the result of~\cite{CDGKS18} it was still open whether one can solve approximate pattern matching problem in time better than $O(nw)$. 

In this paper we first design an offline algorithm by building upon the technique used in~\cite{CDGKS18}. To do this we exploit the similarity between the "dynamic programming graphs" (see Section~\ref{sec:prel}) for approximate pattern matching problem and the edit distance problem. 
As witnessed for example by the running time of our pattern matching algorithm, which is $O(nw^{3/4})$, whereas the running time of the edit distance algorithm is $O(n^{1+5/7})$, this still requires careful
modifications to the edit distance algorithm.
However the scenario becomes more involved if one wants to design an online algorithm using only a small amount of extra space. The approximation algorithm for edit distance in~\cite{CDGKS18} works in two phases: first a covering algorithm is used to discover a suitable set of shortcuts in the pattern matching graph, and then a min-cost path algorithm on a grid graph with the shortcuts yields the desired result.
In the online setting we carefully interleave all of the above phases. However that by itself is not sufficient since the first phase, i.e., the covering algorithm used in~\cite{CDGKS18} essentially relies on the fact that both of the strings are available at any point of time. We modify the covering technique so that it can also be implemented in the situation when we cannot see the full text. We show that if we store the pattern $P$ then we need only $O(w^{1-\gamma})$ extra space to perform the sampling. Furthermore, the min-cost path algorithm in~\cite{CDGKS18} takes $O(w)$ space. We modify that algorithm too in a way so that it also works using only $O(w^{1-\gamma})$ space. We describe our algorithm in more details in Section~\ref{sec:online-pattern-matching}.

\subsection{Related work}
The approximate pattern matching problem is one of the most extensively studied problems in modern computer science due to its direct applicability to data driven applications. In contrast to the exact pattern matching here a text location has a match if the distance between the pattern and the text is within some tolerated limit. In our work we study the approximate pattern matching under edit distance metric. The very first $O(nw)$-time algorithm was given by Sellers~\cite{SEL80} in $1980$. Masek and Paterson~\cite{MP80} proposed an $O(nw/\log n)$-time $O(n)$-space algorithm using Four Russians~\cite{ADKF75} technique. Later ~\cite{M86,LV89,GP90} gave $O(kn)$-time algorithms where $k$ is the upper limit of allowed edit operations. All of these algorithms use either $O(w^2)$ or $O(n)$ space. However ~\cite{GG88,UW93} reduced the space usage to $O(w)$ while maintaining the run time. A faster algorithm was given by Cole and Hariharan~\cite{CH98}, which has a runtime of $O(n(1+k^4/w))$. We refer the interested readers to a beautiful survey by Navarro~\cite{Nav01} for a comprehensive treatment on this topic. 

All the above mentioned algorithms assume that the entire text is available from the very begining of the process. However in the online version, the pattern is given at the beginning and the text arrives in a stream, one symbol at a time. Clifford {\em et al.}~\cite{CEPE08} gave a "black-box algorithm" for online approximate matching where the supported distance metrics are hamming distance, matching with wildcards, $k$-mismatch, $L_1$ and $L_2$ norm. Their algorithm has a run time of $O(\sum_{j=1}^{\log_2 w} T(n,2^{j-1})/n)$ per symbol arrival, where $T(n,w)$ is the running time of the best offline algorithm. This result was extended in ~\cite{CS09} by introducing an algorithm solving online approximate pattern matching under edit distance metric in time $O(k\log w)$ per symbol arrival. This algorithm uses $O(w)$-space. In~\cite{CS10} the runtime was further improved to $O(k)$ per symbol. However none of these algorithms for edit distance metric is black-box and they highly depend on the specific struture of the corresponding offline algorithm. Furthermore all these algorithms use linear space. Recently, Starikovskaya~\cite{STA17} gave a randomized algorithm which has a worst case time complexity of $O((k^2\sqrt{w}+k^{13})\log^4 w)$ and uses space $O(k^8\sqrt{w}\log^6 w)$. Although her algorithm takes both sublinear time and sublinear space for small values of $k$, heavy dependancy on $k$ in the complexity terms makes it much worse than the previously known algorithms in the high regime of $k$. On the lower bound side, Clifford, Jalsenius and Sach~\cite{CJS15} showed in the {\em cell-probe model} that expected amortized run time of any randomized algorithm solving online approximate pattern matching problem must be $\Omega(\sqrt{\log w}/(\log \log w)^{3/2})$ per output.

\section{Preliminaries}
\label{sec:prel}

We recall some basic definitions of \cite{CDGKS18}.
Consider the text $T$ of length $n$ to be aligned along the horizontal axis and the pattern $P$ of length $w$ to be aligned along the vertical axis. For $i\in \{1,\dots,n\}$, $T_i$ denotes the $i$-th symbol of $T$ and for $j\in \{1,\dots,w\}$, $P_j$ denotes the $j$-th symbol of $P$. $T_{s,t}$ is the substring of $T$ starting by the $s$-th symbol and ending by the $t$-th symbol of $T$. 
For any interval $I\subseteq\{0,\dots,n\}$, $T_I$ denotes the substring of $T$ indexed by $I\setminus\{\min(I)\}$ and for $J\subseteq\{0,\dots,w\}$, $P_J$ denotes the substring of $P$ indexed by $J\setminus\{\min(J)\}$. 

\paragraph*{Edit distance and pattern matching graphs. }
For a text $T$ of length $n$ and a pattern $P$ of length $w$, the {\em edit distance graph} $G_{T,P}$ is a directed weighted graph called a grid graph with vertex set $\{0,\cdots,n\}\times\{0,\cdots,w\}$ and following three types of edges: $(i-1,j) \to (i,j)$ (H-steps), $(i,j-1) \to (i,j)$ (V-steps) and $(i-1,j-1) \to (i,j)$ (D-steps). Each H-step or V-step has cost $1$ and each D-step costs $0$ if $T_i=P_j$ and $1$ otherwise. The {\em pattern matching graph} $\tilde{G}_{T,P}$ is the same as the edit distance graph $G_{T,P}$ except for the cost of horizontal edges $(i,0)\to(i+1,0)$ which is zero. 

For $I\subseteq\{0,\dots,n\}$ and $J\subseteq\{0,\dots,w\}$, ${G}_{T,P}(I\times J)$ is the subgraph of ${G}_{T,P}$ induced on $I\times J$. Clearly, ${G}_{T,P}(I\times J)\cong {G}_{T_I,P_J}$. We define the cost of a path $\tau$ in ${G}_{T_I,P_J}$, denoted by $\cost_{{G}_{T_I,P_J}}(\tau)$, as the sum of the costs of its edges. We also define the cost of a  graph ${G}_{T_I,P_J}$, denoted by $\cost({G}_{T_I,P_J})$, as the cost of the cheapest path from $(\min I,\min J)$ to $(\max I,\max J)$. 

The following is well known in the literature (e.g. see~\cite{SEL80}).

 \begin{proposition}
 \label{prop:basicfact_pattern}
 Consider a pattern $P$ of length $w$ and a text $T$ of length $n$, and let $G=\tilde{G}_{T,P}$. 
For any $t \in \{1,\dots,n\}$, let $I=\{0,\cdots,t\}$, and $J=\{0,\cdots,w\}$. Then $k_t=\cost(G(I\times J))=\min_{i\le t} \editd(T_{i,t},P)$. 
 \end{proposition}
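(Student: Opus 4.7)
The plan is to prove both equalities by separating the contribution of the (free) bottom-row horizontal edges of $\tilde{G}$ from the rest, and then invoking the standard Wagner–Fischer correspondence between grid-graph paths and edit alignments.

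First I would recall the classical fact that in the ordinary edit-distance graph, for any $0 \le s \le t \le n$,
\[
\cost\bigl(G_{T,P}(\{s,\dots,t\}\times\{0,\dots,w\})\bigr) \;=\; \editd(T_{s+1,t},P),
\]
since an H-step at any row corresponds to deleting a text symbol, a V-step to inserting a pattern symbol, and a D-step to a match or substitution according to whether $T_i=P_j$, so cheapest paths are in bijection with optimal alignments. I would use this as a black box from prior literature (cited via \cite{SEL80}).

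Next I would show the upper bound $\cost(\tilde{G}(I\times J)) \le \min_{i\le t} \editd(T_{i,t},P)$. Fix any $i \in \{1,\dots,t\}$. Consider the path that walks along the bottom row from $(0,0)$ to $(i-1,0)$ using only H-steps (which are free in $\tilde{G}$) and then follows an optimal edit-distance path from $(i-1,0)$ to $(t,w)$ inside the subgrid on columns $\{i-1,\dots,t\}$. The second portion uses only non-bottom-row H-edges, V-edges, and D-edges, all of which have the same cost in $\tilde{G}$ as in $G_{T,P}$, so by the fact above its cost is exactly $\editd(T_{i,t},P)$. Taking the minimum over $i$ gives the bound.

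For the matching lower bound I would take any path $\pi$ from $(0,0)$ to $(t,w)$ in $\tilde{G}(I\times J)$ and let $i^* \in \{0,\dots,t\}$ be the largest index with $(i^*,0) \in \pi$. The prefix of $\pi$ up to $(i^*,0)$ lies entirely on the bottom row (every departure from row $0$ is via a V- or D-step, which by maximality of $i^*$ is never followed by a return), so it costs $0$. The suffix from $(i^*,0)$ to $(t,w)$ avoids row $0$ after leaving it, hence uses only edges whose costs coincide in $\tilde{G}$ and $G_{T,P}$, and therefore has cost at least $\editd(T_{i^*+1,t},P) \ge \min_{i\le t}\editd(T_{i,t},P) = k_t$. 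Combining both directions yields the claim. The only mild subtlety is handling the boundary cases $i^*=0$ and $i^*=t$ uniformly — the latter corresponds to the empty substring $T_{t+1,t}=\varepsilon$, which contributes $\editd(\varepsilon,P)=w$ to the minimum and is therefore harmless.
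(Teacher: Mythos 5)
The paper does not prove Proposition~\ref{prop:basicfact_pattern}; it states it as well known and cites~\cite{SEL80}. Your proof is correct and gives the expected argument: the free bottom row of $\tilde{G}_{T,P}$ encodes a free choice of starting position in $T$, and the two directions (concatenate free H-steps with an optimal alignment path; split an arbitrary path at the last bottom-row vertex $(i^*,0)$) make this precise, using the fact that the grid is a DAG whose edges only increase coordinates so the path cannot return to row $0$.

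Two small presentation points, neither of which is a gap. In the upper-bound direction, an optimal path realizing $\editd(T_{i,t},P)$ inside $G_{T,P}(\{i-1,\dots,t\}\times\{0,\dots,w\})$ is not guaranteed a priori to avoid bottom-row H-edges, so the claim that its $\tilde{G}$-cost is \emph{exactly} $\editd(T_{i,t},P)$ needs a word; but since every edge of $\tilde{G}$ costs at most the corresponding edge of $G_{T,P}$, the $\tilde{G}$-cost is at most $\editd(T_{i,t},P)$, which is all the upper bound requires (alternatively, one may argue an optimal path can be chosen without bottom-row H-edges). In the lower-bound direction with $i^*=t$, the suffix is a pure vertical climb of cost $w$; since $T_{t+1,t}=\varepsilon$ lies outside the stated range $i\le t$, one should close the case by noting $w\ge\editd(T_{t,t},P)\ge\min_{i\le t}\editd(T_{i,t},P)$, which you effectively do.
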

  A similar proposition is also true for the edit distance graph.
  \begin{proposition}
  \label{prop:basicfact_edit}
  Consider a pattern $P$ of length $w$ and a text $T$ of length $n$, and let $G=G_{T,P}$. For any $i_1\le i_2 \in \{1,\cdots,n\}$, $j_1\le j_2 \in \{1,\cdots,w\}$ let $I=\{i_1-1,\cdots,i_2\}$ and $J=\{j_1-1,\cdots,j_2\}$. Then $\cost(G(I\times J))=\editd(T_{i_1,i_2},P_{j_1,j_2})$. 
  \end{proposition}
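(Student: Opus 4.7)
The plan is to prove this via the classical correspondence between monotone paths in the grid graph and alignments of two strings, which is precisely what underlies the Wagner--Fischer dynamic programming recurrence for edit distance. First, I would reduce to the case $i_1=j_1=1$ by invoking the isomorphism $G_{T,P}(I\times J)\cong G_{T_I,P_J}$ already noted in the preliminaries: after relabelling vertices by $(i,j)\mapsto(i-i_1+1,j-j_1+1)$, the induced subgraph becomes the full edit distance graph on the substrings $T_{i_1,i_2}$ and $P_{j_1,j_2}$, and edge costs are preserved (since a D-step cost depends only on $T_i$ vs.\ $P_j$). So it suffices to show $\cost(G_{T',P'})=\editd(T',P')$ whenever $T'$ has length $n'$ and $P'$ has length $w'$, with the path from $(0,0)$ to $(n',w')$.

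For the direction $\cost(G_{T',P'})\le\editd(T',P')$, I would take an optimal edit script of length $\editd(T',P')$ transforming $T'$ into $P'$ and translate it step by step into a monotone lattice path: a substitution or match aligning $T'_i$ with $P'_j$ becomes a D-step into $(i,j)$, contributing cost $0$ if $T'_i=P'_j$ and $1$ otherwise; a deletion of $T'_i$ becomes an H-step of cost $1$; and an insertion of $P'_j$ becomes a V-step of cost $1$. Summing over the operations, the path cost equals the number of non-match operations, which is exactly $\editd(T',P')$.

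For the reverse inequality $\editd(T',P')\le\cost(G_{T',P'})$, I would take any path from $(0,0)$ to $(n',w')$ of minimum cost and read off an edit script using the inverse correspondence: each H-step is a deletion, each V-step is an insertion, each D-step into $(i,j)$ is a match when $T'_i=P'_j$ (cost $0$) and a substitution otherwise (cost $1$). Because the path is monotone and ends at $(n',w')$, the resulting script indeed transforms $T'$ into $P'$, and the number of non-zero-cost edges equals the number of edit operations, so $\editd(T',P')\le\cost(G_{T',P'})$.

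I do not expect any real obstacle here; the statement is essentially definitional once the bijection is set up. The only point requiring care is the index-shift argument in the very first step, but this is handled by the explicit isomorphism between $G_{T,P}(I\times J)$ and $G_{T_I,P_J}$ noted earlier, so no further bookkeeping is needed.
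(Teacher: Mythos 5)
Your proof is correct and is exactly the standard Wagner--Fischer bijection between monotone lattice paths and edit scripts; the paper itself gives no proof of this proposition, treating it as a well-known fact from the literature (it is stated immediately after Proposition~\ref{prop:basicfact_pattern}, which is attributed to~\cite{SEL80}), so your argument reconstructs precisely the cited folklore reasoning. The reduction via the isomorphism $G_{T,P}(I\times J)\cong G_{T_I,P_J}$ and the two-directional translation between paths and edit operations are handled correctly, with no gaps.
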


 Let $G$ be a grid graph on $I\times J$ and $\tau=(i_1,j_1),\dots,(i_l,j_l)$ be a path in $G$. {\em Horizontal projection} of a path $\tau$ is the set $\{i_1,\dots,i_l\}$. Let $I'$ be a set contained in the horizontal projection of $\tau$, then $\tau_{I'}$ denotes the (unique) minimal subpath of $\tau$ with horizontal projection $I'$. Let $G'=G(I'\times J')$ be a subgraph of $G$. For $\delta\in [0,1]$ we say that $I'\times J'$ {\em $(1-\delta)$-covers} the path $\tau$ if the initial and the final vertex of $\tau_{I'}$ are at a vertical distance of at most $\delta (|I'|-1)$ from $(\min(I'),\min(J'))$ and $(\max(I'),\max(J'))$, resp.. 

 A {\em certified box} of $G$ is a pair $(I'\times J',\ell)$ where $I'\subseteq I$, $J'\subseteq J$ are intervals, 
 and $\ell \in \N$ such that $\cost(G(I'\times J')) \le \ell$. At high level, our goal is to approximate each path $\tau$ in $G$ by a path via the corner vertices of certified boxes. For that we want that a substantial portion of the path $\tau$ goes via those boxes and that the sum of the costs of the certified boxes is not much larger than the actual cost of the path. The next definition makes our requirements precise.
 Let $\sigma=\{(I_1\times J_1,\ell_1),(I_2\times J_2,\ell_2),\dots, (I_m\times J_m,\ell_m)\}$ be a sequence of certified boxes in $G$. 
 Let $\tau$ be a path in $G(I\times J)$ with horizontal projection $I$. 
 For any $k,\zeta \ge 0$, we say that $\sigma$ {\em $(k,\zeta)$-approximates} $\tau$ if the following three conditions hold:
 \begin{enumerate}
 \item $I_1,\dots,I_m$ is a decomposition of $I$, i.e., $I=\bigcup_{i\in [m]} I_i$, and for all $i\in[m-1]$, $\min(I_{i+1})=\max(I_{i})$.
 \item For each $i\in [m]$, $I_i\times J_i$ $(1-\ell_i/(|I_i|-1))$-covers $\tau$.
 \item $\sum_{i\in [m]}  \ell_i \le k \cdot \cost(\tau) + \zeta$.
 \end{enumerate}

\section{Offline approximate pattern matching}

To prove Theorem \ref{thm:main-offline} we design an algorithm as follows. 
For $k=2^j$, $j=0,\dots,\log w^{3/4}$, we run the standard $O(kn)$ algorithm ~\cite{GG88} to identify all $t$ such 
that $k_t \le k$. To identify positions with $k_t \le k$ for $k> w^{3/4}$ where $k$ is a power of two 
we will use the technique of \cite{CDGKS18} to compute $(O(1),O(w^{3/4}))$-approximation
of $k_1,\dots,k_{n}$. The obtained information can be combined in a straightforward manner to get a single $O(1)$-approximation 
to $k_1,\dots,k_{n}$: For each $t$, if for some $2^j \le w^{3/4}$, $k_t$ is at most $2^j$ (as determined by the former algorithm)
then output the smallest such $2^j$ as the approximation of $k_t$, otherwise output the approximation of $k_t$ found by the latter algorithm.
This way, for $k_t \le w^{3/4}$ we will get $2$-approximation, and for $k > w^{3/4}$ we will get a $O(1)$-approximation.
We will now elaborate on the latter algorithm based on \cite{CDGKS18}.
The edit distance algorithm of \cite{CDGKS18} has two phases which we will also use. The first phase ({\em covering phase}) identifies a set of {\em certified boxes}, subgraphs of the pattern matching graph
with good upper bounds on their cost. These certified boxes should cover the min-cost paths of interest. Then the next phase runs a min-cost path algorithm on these boxes to obtain the output sequence.
Both of these phases will take $\tildeO(n w^{3/4})$ time so the overall running time of our algorithm will be $\tildeO(n w^{3/4})$.

We describe the algorithms for the two phases next.
The algorithm will use the following parameters: $w_1=w^{1/4}$, $w_2=w^{1/2}$, $d=w^{1/4}$, $\theta=w^{-1/4}$.
The meaning of the parameters is essentially the same as in \cite{CDGKS18} and we will see it in a moment but their setting is different. 
Let $c_0,c_1 \ge 0$ be the large enough constants from \cite{CDGKS18}.
For simplicity
we will assume without loss of generality that $w_1$ and $w_2$ are powers of two (by rounding them down to the nearest powers of two),
$1/\theta$ is a reciproval of a power of two (by decreasing $\theta$ by at most a factor of two),
$w_2|w$ (by chopping off a small suffix from $P$ which will affect the approximation by a negligible additive error as $w^{3/4} \gg w_2$), and $w|n$ (if not we
can run the algorithm twice: on the largest prefix of $T$ of length divisible by $w$ and then on the largest suffix of $T$ of length divisible by $w$).
The algorithm will
not explicitly compute $k_t$ for all $t$ but only for $t$ where $t$ is a multiple of $w_2$, and then it will use the same value for each block
of $w_2$ consecutive $k_t$'s. Again, this will affect the approximation by a negligible additive error.

\section{Covering phase}
\label{sec:covering-phase}

We describe the first phase of the algorithm now.
First, we partition the text $T$ into substrings $T^0_1,\dots,T^0_{n_0}$ of length $w$, where $n_0 = n/w$. Then we process each of the parts independently. Let $T'$ be one of the parts.
We partition $T'$ into substrings $T^1_1,T^1_2,\dots,T^1_{n_1}$ of length $w_1$, 
and we also partition $T'$ into substrings $T^2_1,T^2_2,\dots,T^2_{n_2}$ of length $w_2$, where $n_1=w/w_1$ and $n_2=w/w_2$. 
For a substring $u$ of $v$ starting by $i$-th symbol of $v$ and ending by $j$-th symbol of $v$, we let $\{i-1,i+1,\dots,j,j\}$ be its {\em span}.
Then the covering algorithm proceeds as follows:

\paragraph*{Dense substrings.}
In this part the algorithm aims to identify for each $\epsilon_j$, that is a power of two, a set of substrings $T^1_i$ which are
similar to more than $d$ {\em relevant} substrings of $P$. (A string is relevant if it starts at a position $j$ such that $j-1$ is divisible by $\epsilon_j w_1/8$
and it is of the same length as $T^1_i$.)
We identify each $T^1_i$ by testing a random sample of relevant substrings of $P$. If we determine with high confidence that there
are at least $\Omega(d)$ substrings of $P$ similar to $T^1_i$, we add $T^1_i$ into a set $D_j$ of such strings, and we also identify
all $T^1_{i'}$ that are similar to $T^1_i$. By triangle inequality we would also expect them to be similar to many relevant substrings of 
$P$. So we add these $T^1_{i'}$ to $D_j$ as well as we will not need to process them anymore. We output the set of certified boxes
of edit distance $O(\epsilon_j w_1)$ found this way. More formally: 

For $j=\lceil \log 1/\theta \rceil,\dots,0$, the algorithm maintains sets $D_j$ of substrings $T^1_i$. These sets are initially empty. 

\smallskip\noindent {{\bf Step 1.}}
For each $i=1,\dots,n_1$ and $j=\lceil \log 1/\theta \rceil,\dots,0$, if $T^1_i$ is in $D_j$ then we continue with the next $i$ and $j$. Otherwise we process it as follows.

\smallskip\noindent {{\bf Step 2.}}
Set $\epsilon_j=2^{-j}$.
Independently at random, sample $8c_0 \cdot w \cdot (\epsilon_j w_1 d)^{-1} \cdot \log n$ many $(\epsilon_j /8)$-aligned substrings of $P$ of length $w_1$. (By an {\em $\ell$-aligned substring} of length $w_1$ in $P$ we mean a substring starting
by a symbol at a position $j$ such that $j-1$ is a multiple of $\max(\lfloor \ell w_1 \rfloor,1)$.) For each sampled substring $u$ check if its edit distance from $T^1_i$ is at most $\epsilon_j w_1$.
If less than $\frac{1}{2} \cdot c_0 \cdot \log n$ of the samples have their edit distance from $T^1_i$ below $\epsilon_j w_1$ then we are done with processing this $i$ and $j$ and we continue with the next pair. 

\smallskip\noindent {{\bf Step 3.}}
Otherwise 
we identify all substrings $T^1_{i'}$ that are not in $D_j$ and are at edit distance at most $2\epsilon_j w_1$ from $T^1_i$, and we let $X$ to be the set of their spans relative to the whole $T$.

\smallskip\noindent {{\bf Step 4.}}
Then we identify all  $(\epsilon_j/8)$-aligned substrings of $P$ of length $w_1$ that are are at edit distance at most $3\epsilon_j w_1$ from $T^1_i$, and we let $Y$ to be the set of their spans.
We might allow also some $(\epsilon_j w_1/8)$-aligned substrings of $P$ of edit distance at most $6\epsilon_j w_1$ to be included in the set $Y$ (as some might be misidentified to have the smaller edit distance from $T^1_i$ by our procedure that searches for them).

\smallskip\noindent {{\bf Step 5.}}
For each pair of spans $(I,J)$ from $X\times Y$ we output corresponding certified box $(I\times J,8\epsilon_j w_1)$. 
We add substrings corresponding to $X$ into $D_j$ and continue with the next pair $i$ and $j$. 

Once we process all pairs of $i$ and $j$, we proceed to the next phase: {\em extension sampling}.

\paragraph*{Extension sampling.}

In this part for every $\epsilon_j=2^{-j}$ and every substring $T^2_i$, which does not have all its substrings $T^1_\ell$ contained
in $D_j$ we randomly sample a set of such $T^1_\ell$'s. For each sampled $T^1_\ell$ we determine all relevant
substrings of $P$ at edit distance at most $\epsilon_j w_1$ from $T^1_\ell$. There should be $O(d)$-many such substrings of $P$.
We extend each such substring into a substring of size $|T^2_i|$ within $P$ and we check the edit distance of the extended string
from $T^2_i$. For each extended substring of edit distance at most $3\epsilon_j w_2$ we output a set of certified boxes.  

Here we define the appropriate extension of substrings. Let $u$ be a substring of $T$ of length less than $|P|$, and let $v$ be a substring of $u$ starting by the $i$-th symbol of $u$. 
Let $v'$ be a substring of $P$ of the same length as $v$ starting by the $j$-th symbol of $P$. The {\em diagonal extension $u'$ of $v'$ in $P$ with respect to $u$ and $v$},
is the substring of $P$ of length $|u|$ starting at position $j-i$. If $(j-i)\le 0$ then the extension $u'$ is the prefix of $P$ of length $|u|$, and if $j-i+|P| > |P|$
then the extension $u'$ is the suffix of $P$ of length $|u|$.

\smallskip\noindent {{\bf Step 6.}}
Process all pairs $i=1,\dots,n_2$ and $j=\lceil \log 1/\theta \rceil,\dots,0$. 

\smallskip\noindent {{\bf Step 7.}}
Independently at random, sample $c_1 \cdot \log^2 n \cdot \log w$ substrings $T^1_\ell$ that are part of $T^2_i$ and that are not in $D_j$. 
(If there is no such substring continue for the next pair of $i$ and $j$.)

\smallskip\noindent {{\bf Step 8.}}
For each $T^1_\ell$, find all $(\epsilon_j/8)$-aligned substrings $v'$ of $P$ of length $w_1$ 
that are at edit distance at most $\epsilon_j w_1$ from $T^1_\ell$.

\smallskip\noindent {{\bf Step 9.}}
For each $v'$ determine its diagonal extension $u'$ with respect to $T^2_i$ and $T^1_\ell$. 
Check if the edit distance of $u'$ and $T^2_i$ is less than $3\epsilon_j w_2$.
If so, compute it and denote the distance by $c$.
Let $I'$ be the span of $T^2_i$ relative to $T$, and $J'$ be the span of $u'$ in $P$. 
For all powers $a$ and $b$ of two, $w^{3/4} \le a \le b \le w$, output the certified box $(I'\times J',c+a+b)$. Proceed for the next $i$ and $j$.

This ends the covering algorithm which outputs various certified boxes.

\smallskip
To implement the above algorithm we will use Ukkonen's ~\cite{UKK85} $O(nk)$-time algorithm to check whether the edit distance of two strings of length $w_1$ is at most $\epsilon_j w_1$
in time $O(w_1^2 \epsilon_j)$. Given the edit distance is within this threshold the algorithm can also output its precise value. 
To identify all substrings of length $w_1$ at edit distance at most $\epsilon_j w_1$ of $S$ from a given string $R$ (where $S$ is the pattern $P$ of length $w$ and $R$ is one of the $T^1_i$ of length $w_1$) 
we use the $O(nk)$-time pattern matching algorithm of Galil and Giancarlo~\cite{GG88}. 
For a given threshold $k$, this algorithm determines for each position $t$ in $S$, whether there is a substring of edit distance at most $k$ from $R$ ending at that position in $S$.
If the algorithm reports such a position $t$ then we know by the following proposition that the substring $S_{t-|R|+1,t}$ is at edit distance at most $2k$. At the same time we are
guaranteed to identify all the substrings of $S$ of length $w_1$ at edit distance at most $k$ from $R$. Hence in Step 4, finding all the substrings at distance $3\epsilon_j w_1$ with perhaps some extra
substrings of edit distance at most $6\epsilon_j w_1$ can be done in time $O(w w_1 \epsilon_j)$.

\begin{proposition}\label{prop-cut}
For strings $S$ and $R$ and integers $t \in \{1,\dots,|S|\}$, $k\ge 0$ , if $\min_{i\le t} \editd(S_{i,t},R) \le k$ then $\editd(S_{t-|R|+1,t},R) \le 2k$.
\end{proposition}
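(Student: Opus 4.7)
The plan is to proceed by a direct triangle-inequality argument, where the length mismatch between $S_{i,t}$ and $R$ is what translates an arbitrary good-matching suffix into a bound on the specific length-$|R|$ suffix. Let $m = |R|$ and set $s = t - m + 1$, so that $S_{s,t}$ is the suffix of $S$ ending at $t$ of length exactly $m$. Let $i \le t$ be an index witnessing $\editd(S_{i,t}, R) \le k$.

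First I would observe that since edit distance upper-bounds the absolute difference of string lengths, $|\,|S_{i,t}| - |R|\,| \le k$, i.e.\ $|(t-i+1) - m| \le k$, which is the same as $|i - s| \le k$. This is the only place where the hypothesis on the distance to $R$ enters beyond triangle inequality.

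Next I would bound $\editd(S_{s,t}, S_{i,t})$. Both strings end at position $t$ and differ only in how far left they start; one is a suffix of the other. If $i \ge s$, then $S_{i,t}$ is obtained from $S_{s,t}$ by deleting the first $i - s$ symbols, and if $i < s$, symmetrically $S_{s,t}$ is obtained from $S_{i,t}$ by deleting $s - i$ symbols. Either way $\editd(S_{s,t}, S_{i,t}) \le |i - s| \le k$.

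Finally, applying the triangle inequality for edit distance,
\[
\editd(S_{s,t}, R) \;\le\; \editd(S_{s,t}, S_{i,t}) + \editd(S_{i,t}, R) \;\le\; k + k \;=\; 2k,
\]
which is exactly the claim. There is no real obstacle here; the only subtlety is noticing that the hypothesis $\editd(S_{i,t}, R)\le k$ already constrains the length of $S_{i,t}$ to be within $k$ of $|R|$, which is what makes the suffix-trimming cost at most $k$ rather than something uncontrolled.
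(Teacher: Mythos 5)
Your proof is correct and takes essentially the same route as the paper's: observe that the hypothesis forces $|S_{i,t}|$ to be within $k$ of $|R|$, bound the cost of trimming (or extending) $S_{i,t}$ to length $|R|$ by that length difference, and finish with the triangle inequality.
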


\begin{proof}
Let $S_{i,t}$ be the best match for $R$ ending by the $t$-th symbol of $S$. Hence, $k = \editd(S_{i,t},R)$.
If $S_{i,t}$ is by $\ell$ symbols longer that $R$ then $k \ge \ell$ and $\editd(S_{t-|R|+1,t},R) \le k + \ell \le 2k$ by the triangle inequality.
Similarly, if $S_{i,t}$ is shorter by $\ell$ symbols.
\end{proof}

\subsection{Correctness of the covering algorithm}

\begin{lemma}
\label{lem-cov}
Let $t\ge 1$ be such that $t$ is a multiple of $w_2$. Let $\tau_t$ be the min-cost path between vertex $(t-w,0)$ and $(t,w)$
in the edit distance graph $G=G_{T,P}$ of $T$ and $P$ of cost at least $w^{3/4}\ge \theta w$. 
The covering algorithm outputs a set of weighted boxes $\mathcal{R}$ such that every $(I\times J, \ell)\in \mathcal{R}$ is correctly certified i.e., $\cost(G(I\times J))\le \ell$ and there is a subset of $\mathcal{R}$ that $(O(1),O(k_t))$-approximates $\tau_t$ with probability at least $1-1/n^7$.
\end{lemma}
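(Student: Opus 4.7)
The plan is to establish the two claims of the lemma separately: first, that every box output by the algorithm is correctly certified, and second, that a subset of $\mathcal{R}$ forms a $(O(1),O(k_t))$-approximation of $\tau_t$. The first claim is essentially bookkeeping via triangle inequality. For Step 5, if $T^1_{i'}$ has span $I$ and $v'$ has span $J$, then $\editd(T^1_{i'},v') \le \editd(T^1_{i'},T^1_i)+\editd(T^1_i,v') \le 2\epsilon_j w_1 + 6\epsilon_j w_1 = 8\epsilon_j w_1$ (the $6\epsilon_j w_1$ accounts for the possible misidentification noted in Step 4), so by Proposition~\ref{prop:basicfact_edit} $\cost(G(I\times J)) \le 8\epsilon_j w_1$. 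For Step 9, $c$ is computed as the exact edit distance of $u'$ and $T^2_i$, so $\cost(G(I'\times J')) = c \le c+a+b$ trivially.

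For the approximation, I would decompose the horizontal projection of $\tau_t$ according to the $w_2$-blocks $T^2_1,\dots,T^2_{n_2}$ (restricted to the window $[t-w,t]$), and assign one certified box per block. Let $\tau^i$ be the sub-path of $\tau_t$ on the span of $T^2_i$, with cost $k^i$, so $\sum_i k^i = k_t$. Pick $\epsilon_j = 2^{-j}$ to be the smallest power of two with $k^i \le \epsilon_j w_2 /C$ for a small constant $C$ (if $k^i$ is too large, $\tau^i$ itself is charged an $O(\epsilon_j w_2)=O(k^i)$ cost and absorbed additively). Sub-divide $T^2_i$ into its $w_1$-blocks; by a Markov-type argument applied to $\tau^i$, a constant fraction of the $T^1_\ell\subset T^2_i$ satisfy that the corresponding sub-path has cost $\le \epsilon_j w_1$, i.e., $T^1_\ell$ is at edit distance $\le \epsilon_j w_1$ from the aligned $(\epsilon_j/8)$-aligned substring of $P$ (after snapping).

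For each such "good" $T^1_\ell$ I would distinguish two cases. In the \emph{dense} case, $T^1_\ell$ is at edit distance $\le \epsilon_j w_1$ from at least $d$ relevant substrings of $P$. Then by a Chernoff bound applied to the $8c_0 w/(\epsilon_j w_1 d)\log n$ samples taken in Step 2, when Step 1 visits either $T^1_\ell$ or any nearby $T^1_{i'}$ already placed in $D_j$, with probability at least $1-1/n^{10}$ the sample count threshold is exceeded and $T^1_\ell$ ends up in $D_j$ with a corresponding Step 5 box having span $(I\times J)$ containing $T^1_\ell$'s span in $X$ and the true matching substring (up to $\epsilon_j w_1/8$-alignment) in $Y$. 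This box $(1-O(\epsilon_j))$-covers the corresponding piece of $\tau^i$. In the \emph{sparse} case, the constant fraction of good $T^1_\ell$'s that are \emph{not} in $D_j$ is still positive, so with probability $1-1/n^{10}$ at least one such $T^1_\ell$ is picked among the $c_1 \log^2 n\cdot \log w$ samples in Step 7, its matching $v'$ is discovered in Step 8, and its diagonal extension $u'$ (which by construction agrees with the path's matching within $O(\epsilon_j w_2)$ edits after accounting for the endpoint slack) is verified in Step 9 to satisfy $\editd(u',T^2_i) \le 3\epsilon_j w_2$. The resulting box spans all of $T^2_i$ and $(1-\epsilon_j)$-covers $\tau^i$ after selecting $a,b$ to be the nearest powers of two above the vertical endpoint offsets of $\tau^i$ from the box corners, contributing cost $c+a+b = O(\epsilon_j w_2 + k^i)$.

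The main obstacle is the careful sampling analysis in the sparse case: one must show that even after the dense phase removes many $T^1_\ell$ from consideration, \emph{enough} good $T^1_\ell$ remain within each $T^2_i$ that the $O(\log^2 n \log w)$-size random sample hits one. This requires carefully bounding the number of $T^1_\ell$'s added to $D_j$ "accidentally" (i.e., through being close to some dense $T^1_{i'}$ rather than being intrinsically dense themselves) — here the factor-$d$ threshold in the density definition is crucial, as it ensures the dense sets $D_j$ cannot grow too quickly relative to the number of candidates. Summing $c+a+b$ over all $T^2_i$ gives total cost $O(k_t) + O(n_2 \cdot \epsilon_{\max} w_2) = O(k_t + \theta w) = O(k_t)$ since $k_t \ge \theta w = w^{3/4}$ by hypothesis. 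A union bound over all $O(n_1 \log(1/\theta))$ dense events and $O(n_2 \log(1/\theta))$ extension events — all at most $n^2$ — gives the overall failure probability $1/n^7$.
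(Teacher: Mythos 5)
Your proposal and the paper take genuinely different routes. The paper's proof of this lemma is essentially a citation: after establishing that all output boxes are trivially certified, it invokes Proposition~3.8 and Theorem~3.9 of~\cite{CDGKS18} verbatim and spends its effort arguing that the three concrete deviations from that algorithm do not break the argument --- namely, (i) boxes are searched in the whole $n\times w$ grid rather than in an $O(k_t)$-width diagonal strip (so the dense/sparse classification of the $T^1_i$'s may differ, but this is argued to be immaterial to Theorem~3.9); (ii) the Galil--Giancarlo search in Step~4 may return substrings at up to twice the target distance, forcing the bound $8\epsilon_j w_1$ in place of the original $5\epsilon_j w_1$ and costing a factor~$2$ in the approximation constant; and (iii) $\tau_t$ may straddle two consecutive blocks $T^0_{\ell-1},T^0_\ell$, which is handled because Theorem~3.9's analysis already proceeds block-by-$w_2$-block and both $t$ and $w$ are multiples of $w_2$. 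You instead attempt a self-contained reconstruction of the covering argument (Markov over $w_1$-sub-blocks, dense vs.\ sparse dichotomy, Chernoff + union bound), which is a reasonable independent path but does not engage with deviations (i)--(iii) at all; in particular you never address why measuring density over all of $P$ instead of a diagonal strip is safe, which is the point the paper goes out of its way to justify.

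Beyond that, there are two places where your reconstruction is too loose to stand on its own. First, the dense/sparse dichotomy in your writeup is applied \emph{per good $T^1_\ell$}, but the $(k,\zeta)$-approximation requires the $I_r$'s to be a decomposition of the horizontal projection: for a fixed $T^2_i$ you must commit to either a tiling by $w_2/w_1$ boxes of width $w_1$ or a single Step~9 box of width $w_2$, and you never say how you resolve a $T^2_i$ in which some good $T^1_\ell$'s are in $D_j$ and others are not, nor how the non-good $T^1_\ell$'s (which have sub-path cost $>\epsilon_j w_1$) get covered at only constant-factor blowup. Second, your cost accounting for Step~9 boxes does not respect the algorithm's constraint $w^{3/4}\le a\le b\le w$: you write ``selecting $a,b$ to be the nearest powers of two above the vertical endpoint offsets,'' but when the offsets are below $w^{3/4}$ the cheapest admissible box already costs $\ge c+2w^{3/4}$, and if this is incurred for every one of the $n_2=w/w_2$ pieces the additive term is $\Theta(n_2\, w^{3/4})$, which your stated bound $O(n_2\cdot\epsilon_{\max} w_2)$ silently drops. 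Making the sum come out to $O(k_t)$ requires more than what you wrote; the paper avoids this entirely by deferring to the already-proved bookkeeping in \cite{CDGKS18}.
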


It is clear from the description of the covering algorithm that it outputs only correct certified boxes from the edit distance graph of $T$ and $P$, that is for each box $(I\times J,\ell)$, $\cost(G(I\times J)) \le \ell$.

The cost of $\tau_t$ corresponds to the edit distance between $P$ and $T_{t-w+1,t}$ and it is bounded by $2 k_t$ by Proposition \ref{prop-cut}.
Let $k'_t$ be the smallest power of two $\ge k_t$. We claim that by essentially the same argument as in Proposition 3.8 and Theorem 3.9 of \cite{CDGKS18} the algorithm outputs with high probability a set of certified boxes that $(O(1),O(k'_t))$-approximates $\tau_t$.

There are differences between the current covering algorithm and that of \cite{CDGKS18}. The main substantial difference
is that the algorithm in \cite{CDGKS18} searches for certified boxes located only within $O(k_t)$ diagonals along the main diagonal
of the edit distance graph. (This rests on the observation of Ukkonen ~\cite{UKK85} that a path of cost $\le k_t$ must pass only through
vertices on those diagonals.) Here we process certified boxes in the whole matrix as each $t$ requires a different ``main'' diagonal.
Except for this difference and the order of processing various pieces the algorithms are the same.

Although technically not quite correct, one could say that the certified boxes output by the current algorithm form a superset
of boxes output by the algorithm of \cite{CDGKS18}. This is not entirely accurate as the discovery of certified boxes depends
on the number ({\em density}) of relevant substrings of $P$ similar to a given $T^1_i$. In \cite{CDGKS18} this density is measured only
in the $O(k_t)$-width strip along the main diagonal of the edit distance graphs whereas here it is measured within the whole $P$. 
(So the actual classification of substrings $T^1_i$ on {\em dense} (in $D_j$) and {\em sparse} (not in $D_j$) might differ between the two algorithms.)
However, this difference is immaterial for the correctness argument in Theorem 3.9 of \cite{CDGKS18}.

Another difference is that in Steps 4 we use $O(w w_1 \epsilon_j)$-time algorithm to search for all the similar substrings. This algorithm
will report all the substrings we were looking for and additionally it might report some substrings of up to twice the required edit distance.
This necessitates the upper bound $8\epsilon_j w_1$ in certified boxes in Step 5. It also means a loss of factor of at most two in the approximation guarantee
as the boxes of interest are reported with the cost $8\epsilon_j w_1$ instead of the 
more accurate $5\epsilon_j w_1$ of the original algorithm in \cite{CDGKS18} which would give a $(45,15 \cost(\tau_t))$-approximation.
(In that theorem $\theta w$ represents an (arbitrary) upper bound on the cost of $\tau_t$ provided it satisfies certain technical
conditions requiring that $\theta$ is large enough relative to $w$. This is satisfied by requiring that 
$\cost(\tau_t) \ge w^{3/4} \ge \theta w$.)

Another technical difference is that the path $\tau_t$ might pass through two edit distance graphs $G_{T^0_{\ell-1},P}$
and $G_{T^0_{\ell},P}$, where $t \in [(\ell-1)w+1,\ell w]$. This means that one needs to argue separately about restriction
of $\tau_t$ to $G_{T^0_{\ell-1},P}$ and $G_{T^0_{\ell},P}$. However, the proof of Theorem 3.9 in \cite{CDGKS18}
analyses approximation of the path in separate parts restricted to substrings of $T$ of size $w_2$. As both $t$ and $w$ are multiples
of $w_2$, the argument for each piece applies in our setting as well.   

\subsection{Time complexity of the covering algorithm} Now we analyze the running time:

\begin{claim}
The covering algorithm runs in time $\tO(nw^{3/4})$ with probability at least $1-1/n^{8}$.
\end{claim}

We analyse the running time of the covering algorithm for each $T'=T^0_i$ separately. We claim that the running time on $T'$ is $\tO(w^{7/4})$ so the total
running time is $\tO((n/w)w^{7/4})=\tO(nw^{3/4})$.

In Step 1, for every $i=1,\dots,n_1$ and $j=0,\dots,\log w^{1/4}$, we might sample $O(\frac{w}{\epsilon_j w_1 d} \cdot \log n)$ 
substrings of $P$ of length $w_1$ and check whether their edit distance from $T^1_i$ is at most $\epsilon_j w_1$. This takes time 
at most $\tO( \frac{w}{\epsilon_j w_1 d} \cdot \frac{w}{w_1} \cdot w_1^2 \epsilon_j) = \tO(w^2 /d) = \tO(w^{7/4})$ in total.

We say that a bad event happens either if  some substring $T^1_i$ has more than $d$ relevant substrings of $P$ having distance at most $\epsilon_j w_1$ but we sample
less than $\frac{1}{2} \cdot c_0 \log n$ of them, or if  some substring $T^1_i$ has less than $d/4$ relevant substrings of $P$ having distance at most $\epsilon_j w_1$ but we sample
more than $\frac{1}{2} \cdot c_0 \log n$ of them. By Chernoff bound, the probability of a bad event happening during the whole run of the covering algorithm is bounded
by $\exp(-O(\log n)) \le 1/n^{8}$, for sufficiently large constant $c_0$. Assuming no bad event happens we analyze the running time of the algorithm further.

Each substring $T^1_i$ that reaches Step 3 can be associated with a set of its relevant substrings in $P$ of edit distance at most $\epsilon_j w_1$ from it.
The number of these substrings is at least $d/4$ many. These substrings must be different for different strings $T^1_i$ that reach Step 3 as if they were not distinct
then the two substrings $T^1_i$ and $T^1_{i'}$ would be at edit distance at most $2 \epsilon_j w_1$ from each other and one of them would be put into $D_j$ in Step 5
while processing the other one so it could not reach Step 3. Hence, we can reach Steps 3--5 for at most $\frac{8w}{\epsilon_j w_1}\cdot \frac{4}{d}$ strings $T^1_i$.
For a given $j$ and each $T^1_i$ that reaches Step 3, the execution of Steps 3 and 4 takes $O(w w_1 \epsilon_j)$ time, hence we will spend in them $\tO(w^2/d)=\tO(w^{7/4})$ time in total.

Step 5 can report for each $j$ at most $\frac{8w}{\epsilon_j w_1}\cdot \frac{w}{w_1}$ certified boxes, so the total time spent in this step is $\tO(w^2/w_1)=\tO(w^{7/4})$ as $\epsilon_j w_1 \ge 1/4$.

Step 7 takes order less time than Step 8. In Step 8 we use Ukkonen's~\cite{UKK85} $O(nk)$-time edit distance algorithm to check the distance of strings of length $w_1$.
We need to check $\tO(n_2 \cdot \frac{w}{\epsilon_j w_1})$ pairs for the total cost $\tO(\frac{w}{w_2} \cdot \frac{w}{\epsilon_j w_1} \cdot w_1^2 \epsilon_j) = \tO(w^{7/4})$ per $j$.

As no bad event happens, for each $T^1_\ell$, there will be at most $d/4$ strings $v'$ processed in Step 9. We will spend $O(w_2^2 \epsilon_j)$ time on each of them to check for edit distance
and $O(\log^2 n)$ to output the certified boxes. Hence, for each $j$ we will spend here  $\tO(\frac{w}{w_2} \cdot d w_2^2 \epsilon_j)$ time, which is $\tO(w w_2 d)$ in total.

Thus, the total time spent by the algorithm in each of the steps is $\tO(w^{7/4})$ as required.

\section{Min-cost Path in a Grid Graph with Shortcuts}
\label{sec:short-path}

In this section we explain how we use certified boxes to calculate the approximation of $k_t$'s.
Consider any grid graph $G$. A {\em shortcut} in $G$ is an additional edge $(i,j)\to (i',j')$ with cost $\ell$, 
where $i < i'$ and $j < j'$.

Let $G_{T,P}$ be the edit distance graph for $T$ and $P$.
Let $(I\times J,\ell)$ be a certified box in $G_{T,P}$ with $|I|=|J|$. If $\ell< 1/2(|I|-1)$ add a shortcut edge $e_{I,J}$ from vertex $(\min I,\min J+\ell)$ to vertex $(\max I,\max J-\ell)$ with cost $3\ell$. Do this for all certified boxes output by the covering algorithm to obtain a graph$G'_{T,P}$. Next remove all the diagonal edges (D-steps) of cost $0$ or $1$ from graph $G'_{T,P}$ and obtain graph graph $G''_{T,P}$. 

\begin{proposition}
\label{prop-short}
If $\tau$ is a path from $(t-w,0)$ to $(t,w)$ in $G_{T,P}$ which is $(k,\zeta)$-approximated by a 
subset of certified boxes $\sigma$ by the covering algorithm then there is a path from $(t-w,0)$ to $(t,w)$ in $G''_{T,P}$ of cost at most $5\cdot (k \cdot \cost_{G_{T,P}}(\tau) + \zeta)$ consisting of shorcut edges corresponding to $\sigma$ and H and V steps.
\end{proposition}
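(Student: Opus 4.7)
The plan is to construct an explicit path $\pi$ from $(t-w,0)$ to $(t,w)$ in $G''_{T,P}$ that uses the shortcut edges corresponding to $\sigma=\{(I_1\times J_1,\ell_1),\dots,(I_m\times J_m,\ell_m)\}$ interleaved with upward V-steps and rightward H-steps, and then to bound $\cost(\pi)$ by $5\sum_i\ell_i$. Condition (3) of $(k,\zeta)$-approximation then immediately upgrades this to the claimed $5(k\cdot\cost_{G_{T,P}}(\tau)+\zeta)$.

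First I will read off the interface $y$-coordinates. Let $x_i$ be the $y$-coordinate of the initial vertex of $\tau_{I_i}$ for $i=1,\dots,m$, and set $x_{m+1}=w$. Then $x_1=0$, and monotonicity of $\tau$ combined with $\max I_i=\min I_{i+1}$ implies $x_i\le x_{i+1}$. Condition (2) applied at the left endpoint of $\tau_{I_i}$ yields $|x_i-\min J_i|\le\ell_i$, and applied at the right endpoint it shows that the final vertex of $\tau_{I_i}$ has $y$-coordinate at least $\max J_i-\ell_i$, hence $x_{i+1}\ge \max J_i-\ell_i$. Condition (1) gives $\sum_i(|I_i|-1)=w$, and since every box output by the covering algorithm satisfies $|J_i|=|I_i|$ we also have $\max J_i-\min J_i=|I_i|-1$.

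Next I will partition the indices into $A=\{i:\ell_i<(|I_i|-1)/2\}$, where the shortcut $e_{I_i,J_i}$ from $(\min I_i,\min J_i+\ell_i)$ to $(\max I_i,\max J_i-\ell_i)$ of cost $3\ell_i$ is present in $G''_{T,P}$, and $B$, its complement. For $i\in A$, the segment of $\pi$ on horizontal range $[\min I_i,\max I_i]$ goes by V-steps from $(\min I_i,x_i)$ up to $(\min I_i,\min J_i+\ell_i)$, then the shortcut, then V-steps from $(\max I_i,\max J_i-\ell_i)$ up to $(\max I_i,x_{i+1})$; every V-step is upward thanks to the bounds $x_i\le\min J_i+\ell_i$ and $x_{i+1}\ge\max J_i-\ell_i$. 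For $i\in B$, the segment uses $(|I_i|-1)$ H-steps and $(x_{i+1}-x_i)$ V-steps in any legal order. Consecutive segments share endpoints, and the first and last start and end at $(t-w,0)$ and $(t,w)$ respectively, so concatenating them produces $\pi$.

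Finally I will bound $\cost(\pi)$ by a global accounting. Total horizontal displacement $w$ splits between shortcuts ($\sum_{i\in A}(|I_i|-1)$, contributing no H-step cost) and H-steps ($\sum_{i\in B}(|I_i|-1)$); total vertical displacement $w$ splits between shortcuts ($\sum_{i\in A}((|I_i|-1)-2\ell_i)$) and V-steps (the remainder $\sum_{i\in B}(|I_i|-1)+2\sum_{i\in A}\ell_i$). Adding the shortcut cost $3\sum_{i\in A}\ell_i$ gives
\[
\cost(\pi)\le 2\sum_{i\in B}(|I_i|-1)+5\sum_{i\in A}\ell_i\le 4\sum_{i\in B}\ell_i+5\sum_{i\in A}\ell_i\le 5\sum_i\ell_i,
\]
where the middle step uses $|I_i|-1\le 2\ell_i$ for $i\in B$. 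The only subtlety is checking that every constructed V-step points upward, which is immediate from the cover-condition bounds on $x_i$ and $x_{i+1}$; the rest is careful bookkeeping.
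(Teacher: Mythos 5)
Your proof is correct and follows essentially the same route as the paper: you trace the same interface $y$-coordinates $x_i$ (the paper's $j_r$), split the boxes into those with a shortcut edge ($\ell_i<(|I_i|-1)/2$, the paper's $\sigma'$) and those without, connect consecutive interfaces with V-steps plus either a shortcut or an H/V staircase, and bound the total cost by $5\sum_i\ell_i$ via the same $2\sum_{i\in B}(|I_i|-1)+5\sum_{i\in A}\ell_i\le 5\sum_i\ell_i$ accounting. The only cosmetic difference is that you phrase the vertical cost via a global displacement argument while the paper sums segment by segment; the bounds and constants are identical.
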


\begin{proof}
Let $\{(I_1\times J_1,\ell_1),(I_2\times J_2,\ell_2),\dots, (I_m\times J_m,\ell_m)\}$ be the set of certified boxes that $(k,\zeta)$-approximates $\tau$ and $\sigma'$ be a subset of $\sigma$ such that for any pair $(I_r\times J_r,\ell_r)$ in $\sigma'$, $\ell_r< 1/2(|I_r|-1)$. By definition for each $i$, $\max I_i\le \min I_{i+1}$. We approximate path $\tau$ by a path $\tau'$ as illustrated in Fig.~\ref{fig:edge-set-F}(b). For each $r\in [m]$ let $p_r=(i_r,j_r)$ be the first vertex of $\tau_{I_r}$. Defime $p_{m+1}=(t,w)$. Moreover if $(I_r\times J_r,\ell_r)\in \sigma'$ then let $p_r'=(i_r',j_r')$ and $q_r'=(u_r',v_r')$ be the start and end vertex, resp., of the corresponding shortcut edge. As $I_r\times J_r$ $(1-\ell/(|I_r|-1))$-covers $\tau$, $j_r'=\min(J_r)+\ell_r\ge j_r$ and $v_r'=\max(J_r)-\ell_r\le j_{r+1}$. Hence we define $\tau'$ passing through all of the $p_r$. For each $r$ the part of $\tau'$ between vertex $p_r$ and $p_{r+1}$ can be constructed in the following way: first climb from $p_r$ to $p_r'$ using V steps, then if $(I_r\times J_r,\ell_r)\in \sigma'$ take the shorcut edge $e_{I_r,J_r}$ from $p_r'$ to $q_r'$ and then climb up to $p_{r+1}$, otherwise take H steps from $p_r'$ to reach $(i_{r+1},j_r')$ and then take V steps upto vertex $p_{r+1}$. 

\begin{center}
\begin{figure}[ht]
\centerline{\includegraphics{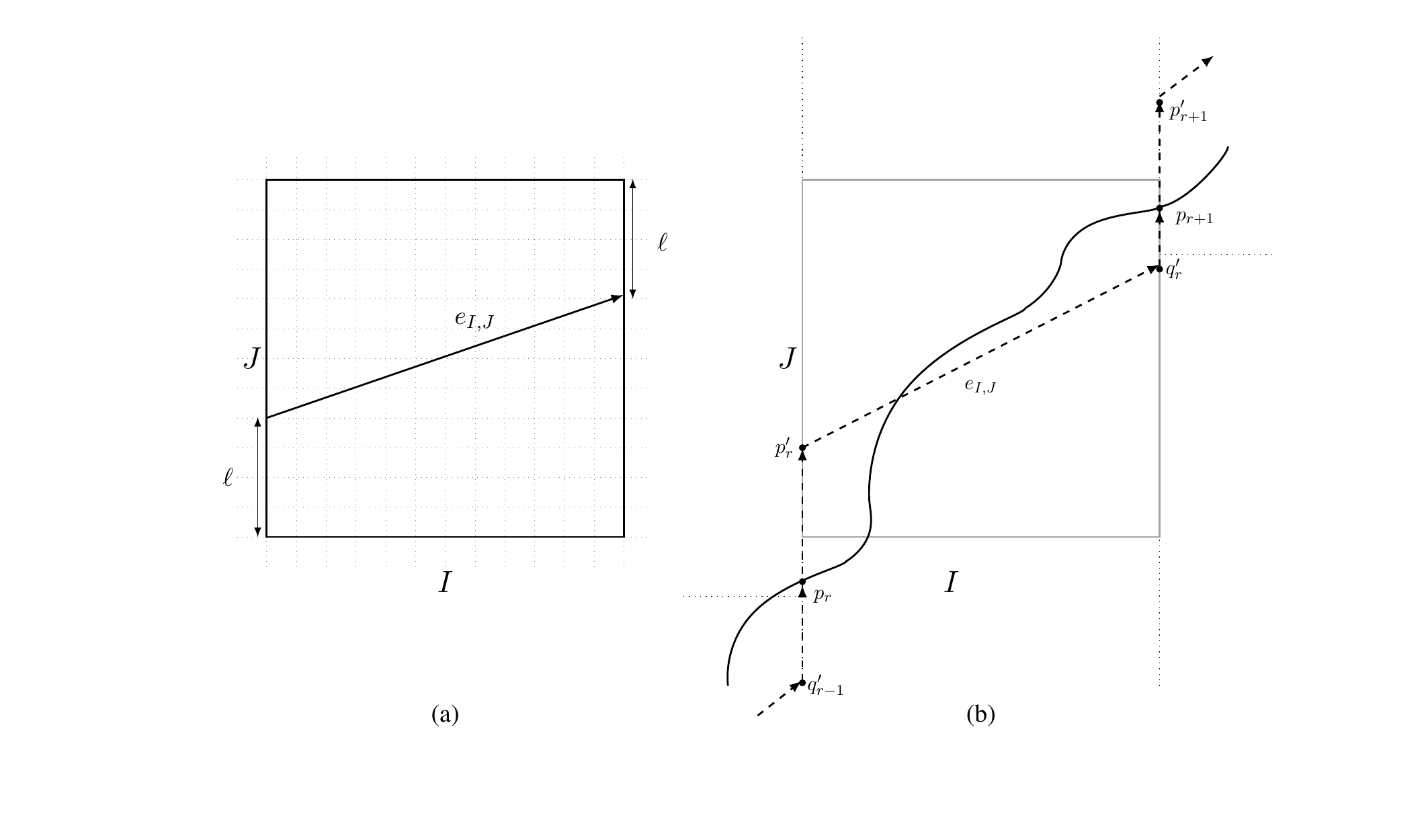}}
\caption{(a) The shortcut edge $e_{I,J}$ corresponding to a certified box $(I,J,\ell)$. (b) An example of a path $\tau$ (in solid) passing through a certified box $(I,J,\ell)$. The dashed path is an approximation $\tau'$ of $\tau$ in $G''_{T,P}$.}
   \label{fig:edge-set-F}
\end{figure}

\end{center}

Next we argue about the cost of $\tau'$. For each $r\in [m]$, if $(I_r\times J_r,\ell_r)\in \sigma'$, cost of $e_{I_r,J_r}$ is $3\ell_r$ otherwise the horizontal path with projection $I_r$ has cost $|I_r|-1\le 2\ell_r$. Hence the total cost is $\sum_r 3\ell_r$. The sum of the cost of the vertical edges is $w-\sum_{(I_r\times J_r,\ell_r)\in \sigma'}v_r'-j_r'= \sum_{(I_r\times J_r,\ell_r)\in \sigma'} (|J_r|-1)-(v_r'-j_r')+\sum_{(I_r\times J_r,\ell_r)\notin \sigma'}|J_r|-1\le \sum_r 2\ell_r$ as $\sum_r|I_r|-1=\sum_r|J_r|-1=w$. Hence the total cost of $\tau'$ is at most $\sum_r 5\ell_r$. Since $\sum_{i\in [m]}  \ell_i \le k \cdot \cost_{G_{T,P}}(\tau) + \zeta$ by definition of $(k,\zeta)$-approximation,  we get cost of $\tau'$ is at most $5\cdot (k \cdot \cost_{G_{T,P}}(\tau) + \zeta)$.      
\end{proof}

By Lemma \ref{lem-cov} and Proposition \ref{prop-short}, 
for $t$, where $w_2|t$, the cost of a shortest path from $(t-w,0)$ to $(t,w)$ in $G''_{T,P}$
is bounded by $O(k_t)$. At the same time, any path in $G''_{T,P}$ from $(i,0)$ to $(t,w)$, $i\le t$, has cost
at least $k_t$. So we only need to find the minimal cost of a shortest path from any $(i,0)$ to $(t,w)$ in $G''_{T,P}$
to get an approximation of $k_t$.

To find the minimal cost, we reset to zero the cost of all horizontal edges $(i,0)\to(i+1,0)$ in $G''_{T,P}$ to get a graph $G$. 
The graph $G$ corresponds to taking the pattern matching graph $\tilde{G}_{T,P}$, removing from it all its diagonal edges and adding the shortcut
edges. The cost of a path from $(0,0)$ to $(t,w)$ in $G$ is 
the minimum over $i\le t$ of the cost of a shortest path from $(i,0)$ to $(t,w)$ in $G''_{T,P}$.

Hence, we want to calculate the cost of the shortest path from $(0,0)$ to $(t,w)$ for all $t$.\footnote{Although, we really care only about $t$, where
$w_2 | t+1$ as for all the other values of $t$ we will approximate $k_t$ by $k_{t'}$ for the previous multiple $t'+1$ of $w_2$.} For this we will use a simple
algorithm that will make a single sweep over the shortcut edges sorted by their origin and calculate the distances for $t=0,\dots,n$.
The algorithm will maintain a data structure that at time $t$ will allow to answer efficiently queries 
about the cost of the shortest path from $(0,0)$ to $(t,j)$ for any $j \in \{0,\dots, w\}$.

The data structure will consist of a binary tree with $w+1$ leaves. Each node is associated with a subinterval of $\{0,\dots,w\}$
so that the $j$-th leaf (counting from left to right) corresponds to $\{j\}$, 
and each internal node corresponds to the union of all its children. We denote by $I_v$ the interval associated with a node $v$.
The depth of the tree is at most $1+\log (w+1)$. At time $t$, query to the node $v$ of the data structure will return
the cost of the shortest path from $(0,0)$ to $(t,\max I_v)$ that uses some shortcut edge $(i,j)\to(i',j')$, where $j' \in I_v$.
Each node $v$ of the data structure stores a pair of numbers $(c_v,t_v)$, where $c_v$ is the cost of the relevant shortest path 
from $(0,0)$ to $(t_v,\max I_v)$ and $t_v$ is the time it was updated the last time. (Initially this is set to $(\infty,0)$.)
At time $t\ge t_v$, the query to the node $v$ returns $c_v+(t-t_v)$.  

At time $t$ to find the cost of the shortest path from $(0,0)$ to $(t,j)$ we traverse the data structure from the root to the leaf $j$.
Let $v_1,\dots,v_\ell$ be the left children of the nodes along the path in which we continue to the right child. 
We query nodes $v_1,\dots,v_\ell$ to get answers $a_1,\dots,a_\ell$. The cost of the shortest paths from $(0,0)$ to $(t,j)$
is $a=\min \{j,a_1+(j-\max I_{v_1}),a_2+(j-\max I_{v_2}),\dots, a_\ell+(j-\max I_{v_\ell})\}$. As each query takes $O(1)$ time to answer, computing
the shortest path to $(t,j)$ takes $O(\log w)$ time.

The algorithm that outputs the cheapest cost of any path from $(0,0)$ to $(t,w)$ in $G$ will process the shortcut edges $(i,j)\to(i',j')$ one by one
in the order of increasing $i$. The algorithm will maintain lists $L_0,\dots,L_n$ of updates to the data structure to be made
before time $t$. At time $t$ the algorithm first outputs the cost of the shortest path from $(0,0)$ to $(t,w)$. 
Then it takes each shortcut edge $(t,j)\to(t',j')$ one by one, $t<t'$. (The algorithm ignores shortcut edges where $t=t'$.) 
Using the current state of the
data structure it calculates the cost $c$ of a shortest path from $(0,0)$ to $(t,j)$ and adds $(c+d,j')$ to list $L_{t'}$,
where $d$ is the cost of the shortcut edge $(t,j)\to(t',j')$. 

After processing all edges starting at $(t,\cdot)$
the algorithm performs updates to the data structure according to the list $L_{t+1}$.
Update $(c,j)$ consists of traversing the tree from the root to the leaf $j$ and in each node $v$ updating its current
values $(c_v,t_v)$ to the new values $(c'_v,t+1)$, where $c'_v = \min \{c_v + t + 1 - t_v, c + \max I_v - j\}$.

Then the algorithm increments $t$ and continues with further edges.

If the number of shortcut edges is $m$ then the algorithm runs in time $O(n+ m (\log m + \log w))$. First, it has to set-up the
data structure, sort the edges by their origin and then it processes each edge. Processing each edge will require $O(\log w)$ time
to find the min-cost path to the originating vertex and then later at time $t'$ it will require time $O(\log w)$ to 
update the data structure. As there are $\tO(\frac{n}{w} \cdot \frac{w}{\theta w_1} \cdot \frac{w}{w_1}) \le \tO(nw^{3/4})$ certified boxes in total the running time of the algorithm is as required.

The correctness of the algorithm is immediate from its description.

\section{Online approximate pattern matching}
\label{sec:online-pattern-matching}
In this section we describe the online algorithm from Theorem~\ref{thm:main-online}. It is based on interleaved execution of the cover and min-cost path algorithms from Sections \ref{sec:covering-phase} and \ref{sec:short-path} where we also need to maintain some extra datastructure in a clever manner for the covering algorithm. Also to get the required space bound we use a little modified tree data structure for the min-cost path algorithm. 
We will use the same parameters as there but we will set their values slightly differently:   $w_1=w^{11/18}$, $w_2=w^{20/27}$, $d=w^{7/54}$, $\theta=w^{-1/9}$.

We explain now how to interleave the two algorithms to achieve required time and space bound. 
For each substring $T^0_m$ of $w$ consecutive input symbols, and $j=\lceil \log 1/\theta \rceil,\dots,0$ the algorithm will maintain a set $D'_j$ of the content of strings $T^1_i$ that reached
Step 3 of the covering algorithm during processing of $T^0_m$ and for each of strings it will also store a set $Y_{i,j}$ of spans obtained in Step 4. This is done as we the whole $w$ length string $T^0_m$ can't be stored at once. Moreover to bound the size of $D'_j$ and $Y_{i,j}$ before adding a new $T^1_i$ that reached
Step 3 of the covering algorithm to $D'_j$, we first ensure that no string close to $T^1_i$ is already contained in $D'_j$. After finishing each $T^0_m$ we discard all this information.

The algorithm processes the input text $T$ in batches
of $w_2$ symbols. Upon receipt of the $t$-th symbol we buffer the symbol, if $t$ is not divisible by $w_2$ then the algorithm outputs the previous value $k_{t-1}$ as the current value $k_t$
and waits for the next symbol.
Otherwise we received batch $T^2_\ell$ of next $w_2$ symbols, for $\ell=t/w_2$, and we will proceed as follows.

We will execute the covering algorithm twice on $T^2_\ell$ where during the first execution the only thing that we will send to the min-cost path algorithm are the certified boxes produced at Step 9,
all other modifications to data structures will be discarded. During the second run of the algorithm on $T^2_\ell$, we will preserve all modifications to $D'_j$'s and other data structures
except we will discard the certified boxes produced at Step 9 (we will not send them to the min-cost path algorithm as they already got there in the first pass).

We will maintain sets $S_j$, $j=\lceil \log 1/\theta \rceil,\dots,0$. We empty all of them at this point.
We partition $T^2_\ell$ into $T^1_g,\dots,T^1_h$ of length $w_1$, where $g=(\ell-1)\cdot \frac{w_2}{w_1}+1$ and $h=g+\frac{w_2}{w_1}-1$. For $i=g,\dots,h$ we do the following.
For each $j=\lceil \log 1/\theta \rceil,\dots,0$, set $\epsilon_j = 2^{-j}$. Check, whether $T^1_j$ is at edit distance at most $2\epsilon_j w_1$ from some string $T^1_{i'}$ in $D'_j$.
If it is then send the set of all the certified boxes $(I,J,8\epsilon_jw_1)$ to the min-cost path algorithm, where $I$ is the span of $T^1_i$ in $T$ and $J\in Y_{i',j}$.
If it is not close to any string in $D'_j$ then sample the relevant substring in $P$ as in Step 2 and see how many of them are at edit distance $\le \epsilon_j w_1$ from $T^1_i$.
If at most $\frac{1}{2} \cdot c_0 \cdot \log n$ of the samples have their edit distance from $T^1_i$ below $\epsilon_j w_1$ then put index $i$ into $S_j$ and continue for another $j$ and then the next $i$.
Otherwise we execute Step 4 of the algorithm to find set $Y$. (We always skip Step 3.) We put $T^1_i$ into $D_j$ and set $Y_{i,j}$ to $Y$.
During the second pass over the algorithm, we send all the certified boxes $(I,J,8\epsilon_j w_1)$ to the min-cost path algorithm, 
where $I$ is the span of $T^1_i$ and $J \in Y_{i,j}$. Upon processing all $j$ and $i$ we continue to the sparse extension sampling part. 

For each $j=\lceil \log 1/\theta \rceil,\dots,0$, we sample from the set $S_j$ the strings $T^1_\ell$ in Step 7, and we proceed for them as in Steps 8--9.
During the first pass over the algorithm, for each certified box $(I,J,\ell')$ produced in Step 9 round up $\ell'$ to the nearest larger or equal power of two and send the box the the min-cost path algorithm.

The min-cost path algorithm receives certified boxes from the covering algorithm and it converts them into corresponding shortcut edges. The algorithm
receives the edges at two distinct phases. Edges received during the first phase corresponding to boxes that were produced at Step 9 are sorted by their 
originating vertex, stored, and processed at appropriate time steps during the next phase.
During the next phase the algorithm receives boxes $(I,J,8\epsilon_j w_1)$, where $I$ is the span of some $T^1_i$ and $J \in Y_{i,j}$. It converts them into edges and upon receiving all the edges
for a particular $T^1_i$, it sorts them according to their originating vertex. Then the min-cost path algorithm proceeds for times steps $(i-1) \cdot w_1$ to $i \cdot w_1-1$,
and processes all stored edges that originate in these time steps. During these time steps it also updates its tree data structure as in the offline case. Again we use lists for storing pending updates.
At any moment of time, the number of unprocessed edges and updates is bounded by the number of edges produced in Step 9 and edges produced for a particular string $T^1_i$.
This is at most $\tO(\frac{w}{\theta w_1})$.

Here we describe the modified tree datastructure used for the min-cost path algorithm. We round up all the edit distance estimates to powers of two. Moreover every shortcut edge corresponds to some certified box, hence the number of distinct vertical positions where the shortcut edges might originate from or lead to is bounded
by $q=\frac{8w}{\theta w_1} \cdot \log 1/\theta$. Thus the tree data structure of the min-cost path algorithm will ever perform updates to at most $q \log w$ 
distinct nodes. We do not need to store the nodes that are never updated, so the tree data structure will occupy only space $\tO(\frac{w}{\theta w_1})$.
We conclude by the following lemma:

\begin{lemma}
 \label{prop:online_complexity}
Let $n$ and $w$ be large enough integers. Let $P$ be the pattern of length $w$, $T$ be the text of length $n$ (arriving online one symbol at a time), $1/w\le \theta \le 1$ be a real. Let $\theta w_1\ge 1$, $w_1\le \theta w_2$, $w_1 | w_2$ and $w_2 |n$. With probability at least $1-1/\poly(n)$ the online algorithm for pattern matching runs in amortized time $\tO(\frac{w}{d}+\frac{ww_1}{w_2}+dw_2+\frac{w}{w_1})$ per symbol and in succinct space $\tO(w_2 + \frac{w}{d\theta}+\frac{w}{w_1\theta}+\frac{w^2}{{\theta}^2{w_1^2}d}+d)$.  
 \end{lemma}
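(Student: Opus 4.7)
The plan is to bound separately the time and space consumed by the two interleaved components---the modified covering algorithm and the shortcut min-cost path algorithm---per batch of $w_2$ input symbols, and then amortize; correctness will follow by invoking Lemma~\ref{lem-cov} on each output position and taking a union bound.

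For the time per batch, the dense phase dominates as follows. For each of the $w_2/w_1$ substrings $T^1_i$ and each scale $j$, the algorithm either compares $T^1_i$ against the content of $D'_j$ or samples $\tO(w/(\epsilon_j w_1 d))$ relevant substrings of $P$ and runs Ukkonen's $\tO(w_1^2\epsilon_j)$-time edit-distance check on each; both contribute $\tO(ww_2/d)$ per batch, i.e.\ $\tO(w/d)$ amortized per symbol. Step 4, executed whenever $T^1_i$ is newly added to $D'_j$, costs $\tO(ww_1\epsilon_j)$ each time; the distinct-substring pigeonhole argument from the offline claim caps the number of such additions per $T^0_m$ at $\tO(w/(\epsilon_j w_1 d))$, so amortizing over the $w$ symbols of $T^0_m$ again yields $\tO(w/d)$ per symbol. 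The extension-sampling phase contributes $\tO(1)$ sampled $T^1_\ell$ per (batch, $j$) pair, each incurring $\tO(ww_1\epsilon_j)$ for Step 8 and $\tO(dw_2^2\epsilon_j)$ for Step 9, giving the $\tO(ww_1/w_2)$ and $\tO(dw_2)$ terms after summing over $j$ and dividing by $w_2$. The residual $\tO(w/w_1)$ term is charged to the min-cost path machinery: each certified box becomes one shortcut edge, whose handling and corresponding lazy-tree update take $\tO(\log w)$ time, and summing over the edges emitted per batch produces this bound.

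For the space, the buffered batch occupies $\tO(w_2)$ cells. For each $j$ the sets $D'_j$ persistent over the current $T^0_m$ hold $|D'_j|\le \tO(w/(\epsilon_j w_1 d))$ strings of length $w_1$, contributing $\tO(w/(\epsilon_j d))$, and for each of these strings a set $Y_{i,j}$ of at most $\tO(w/(\epsilon_j w_1))$ spans, contributing $\tO(w^2/(\epsilon_j^2 w_1^2 d))$; summing over $j$ yields the $\tO(w/(d\theta))$ and $\tO(w^2/(\theta^2 w_1^2 d))$ summands. The lazy min-cost tree only materializes nodes that are ever updated, of which there are at most $\tO((w/(\theta w_1))\log w)=\tO(w/(w_1\theta))$, and the queue of pending edges and updates has size $\tO(w/(\theta w_1))$, yielding the $\tO(w/(w_1\theta))$ term. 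Finally, a scratch buffer of $\tO(d)$ holds the active samples. For correctness, Lemma~\ref{lem-cov} gives the certified-box approximation contract with probability at least $1-1/n^7$ per output position, and a union bound over the $n/w_2$ positions produces overall failure probability $1/\poly(n)$.

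I expect the main obstacle to be verifying that the online modifications---skipping Step 3, reusing each $Y_{i',j}$ for newly arriving $T^1_i$ matched against $D'_j$, and making two passes over every batch to interleave sampling with box emission---still satisfy the hypothesis of Lemma~\ref{lem-cov}, so that the resulting family of certified boxes $(k,\zeta)$-approximates the relevant min-cost paths exactly as in the offline proof. Once this invariant is established, the rest of the argument is a careful bookkeeping exercise that mirrors the offline analysis of Section~\ref{sec:covering-phase} under the new parameter setting $w_1=w^{11/18}$, $w_2=w^{20/27}$, $d=w^{7/54}$, $\theta=w^{-1/9}$.
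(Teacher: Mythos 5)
Your proposal is correct and follows essentially the same approach as the paper's proof: bound the per-batch cost of the covering algorithm (with the new $D'_j$-comparison replacing Step~3) and of the min-cost path structure, amortize over $w_2$ symbols, then account for space by bounding $|D'_j|$, the $Y_{i,j}$ sets, and the lazy tree. You supply more explicit term-by-term derivations than the paper (which largely defers to the offline analysis and only spells out the new $\tO(w/d)$ contribution), but the decomposition and key counting arguments are the same.
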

 \begin{proof}
 The running time of the online algorithm can be analysed in a similar manner as the offline algorithm. The only difference is that here the covering and the min-cost path algorithm is interleaved. For each batch of $w_2$ symbols we run the covering algorithm twice with the modification that instead of executing Step 3 for each $T^1_j$ we check whether it is at distance at most $2{\epsilon}_jw_1$ from some string in ${D'}_j$. But this step takes amortized time $\tO(\frac{w}{{\epsilon}_jw_1d}\cdot{w_1^2}\epsilon_j\cdot \frac{1}{w_1})=\tO(\frac{w}{d})$. Moreover the total number of certified boxes send by the covering algorithm to the min-cost path algorithm is the same in both the offline and the online algorithm. Hence the online algorithm has the amortized time of $\tO(\frac{w}{d}+\frac{ww_1}{w_2}+dw_2+\frac{w}{w_1})$ per symbol.
 
 To determine the space complexity of the online algorithm we analyse the space used by the covering algorithm and the min-cost path algorithm separately. At any time the covering algorithm stores a batch of $w_2$ symbols which takes space $O(w_2)$. Next for $j=\lceil \log 1/\theta\rceil,\dots,0$ it stores set $D'_j$ of strings $T^1_i$ that reached Step 3 of the covering algorithm. Each of these strings is of length $w_1$, hence requires $O(w_1)$ space. Moreover for each such string the algorithm stores set $Y_{i,j}$ of spans obtained at Step 4 and this require space $O(\frac{w}{\epsilon_jw_1})$. For each such string (as it reached Step 3), there exist at least $d/4$ relevant substrings of $P$ which are at distance at most $\epsilon_jw_1$, and for any two strings of $D'_j$ (as they are at distance more than $2\epsilon_jw_1$) these sets of relevant substrings of $P$ are disjoint. Hence $D'_j$ stores contents of at most $\frac{4w}{\epsilon_jw_1d}$ different strings and the total space used by all $D'_j$ and $Y_{i,j}$ is $\tO(\frac{w}{\theta w_1d}\cdot w_1+\frac{w}{\theta w_1d}\cdot \frac{w}{\theta w_1})=\tO(\frac{w}{\theta d}+\frac{w^2}{\theta^2{w_1^2}d})$. Maintaining sets $S_j$ does not require any extra space as we store the whole batch of $w_2$ symbols. As argued before the tree data structure stored by the min-cost path algorithm occupies space $\tO(\frac{w}{\theta w_1})$ and the list of edges can be stored in $\tO(\frac{w}{\theta w_1}+d)$ space. Hence total succinct space used by the online algorithm is $\tO(w_2 + \frac{w}{d\theta}+\frac{w}{w_1\theta}+\frac{w^2}{{\theta}^2{w_1^2}d}+d)$. 
\end{proof}

For example, we can instantiate the above proposition for the parameters: $w_1=w^{11/18}$, $w_2=w^{20/27}$, $d=w^{7/54}$, $\theta=w^{-1/9}$, to get the following:

\begin{theorem}
There is a constant $c\ge 1$ so that there is a randomized online algorithm that computes $(c,w^{8/9})$-approximation to approximate pattern matching in amortized time $O(w^{1-(7/54)})$ and succinct space $O(w^{1-(1/54)})$ with probability at least $1-1/\poly(n)$.
\end{theorem}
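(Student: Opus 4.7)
The plan is to derive this theorem as a direct corollary of Lemma \ref{prop:online_complexity} by instantiating the four parameters $w_1=w^{11/18}$, $w_2=w^{20/27}$, $d=w^{7/54}$, $\theta=w^{-1/9}$ and verifying that (a) the parameter hypotheses of the lemma hold and (b) each of the time and space bounds collapse to the claimed exponents.

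First I would check the structural preconditions of Lemma \ref{prop:online_complexity}: one needs $1/w\le\theta\le 1$, $\theta w_1\ge 1$, $w_1\le\theta w_2$, $w_1\mid w_2$, and $w_2\mid n$. The first is immediate since $\theta=w^{-1/9}$. For $\theta w_1\ge 1$, note $\theta w_1 = w^{-1/9+11/18} = w^{9/18}=w^{1/2}\ge 1$. For $w_1\le\theta w_2$ we compute $\theta w_2=w^{-1/9+20/27}=w^{-3/27+20/27}=w^{17/27}$ which must be at least $w_1=w^{11/18}=w^{33/54}$; since $17/27=34/54\ge 33/54$, this holds. The divisibility constraints $w_1\mid w_2$ and $w_2\mid n$ can be enforced by the same rounding/padding reductions used already in the offline algorithm (rounding to nearest powers of two and chopping a short suffix/prefix), which only affect approximation by negligible additive error absorbed into the $w^{8/9}$ term.

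Next I would plug the parameters into the amortized time bound $\tO(w/d+ww_1/w_2+dw_2+w/w_1)$. Converting all exponents to a common denominator of $54$: $w/d = w^{47/54}$; $ww_1/w_2 = w^{1+33/54-40/54}=w^{47/54}$; $dw_2 = w^{7/54+40/54}=w^{47/54}$; and $w/w_1=w^{21/54}$ which is dominated. So the total amortized cost is $\tO(w^{47/54}) = O(w^{1-7/54})$ after absorbing polylog factors into the implicit constant. Analogously for the space bound $\tO(w_2+w/(d\theta)+w/(w_1\theta)+w^2/(\theta^2 w_1^2 d)+d)$: $w_2=w^{40/54}$; $w/(d\theta)=w^{1-7/54+6/54}=w^{53/54}$; $w/(w_1\theta)=w^{27/54}$; $w^2/(\theta^2 w_1^2 d)=w^{108/54+12/54-66/54-7/54}=w^{47/54}$; $d=w^{7/54}$. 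The dominant term is $w^{53/54}=w^{1-1/54}$, matching the claimed succinct space bound.

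Finally I would account for the additive error. By Lemma \ref{lem-cov} and Proposition \ref{prop-short} the shortcut-based min-cost path analysis is guaranteed to yield an $O(1)$-multiplicative approximation to $k_t$ only when $k_t \ge \theta w$; for smaller $k_t$ the algorithm's output is bounded by $O(\theta w)$ in any case, which serves as the additive error. With $\theta = w^{-1/9}$, we get $\theta w = w^{8/9}$, so the overall guarantee is a $(c, O(w^{8/9}))$-approximation. The success probability $1-1/\poly(n)$ is inherited directly from Lemma \ref{prop:online_complexity}. The only subtle point to double-check will be that the additive error absorbed from the $w_2$-block quantization of $k_t$ (the algorithm outputs the same estimate for $w_2$ consecutive positions) is $O(w_2)=O(w^{20/27})$, which is well below $w^{8/9}=w^{24/27}$ and hence fits inside the stated additive bound.
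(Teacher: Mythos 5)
Your proposal is correct and takes the same route as the paper, which itself derives this theorem simply by instantiating Lemma \ref{prop:online_complexity} with $w_1=w^{11/18}$, $w_2=w^{20/27}$, $d=w^{7/54}$, $\theta=w^{-1/9}$; your exponent arithmetic, your verification of the lemma's hypotheses, and your identification of $\theta w=w^{8/9}$ as the additive-error threshold (with the $w_2$-block quantization subsumed) all check out. The only minor caveat, shared with the paper's own statement, is that Lemma \ref{prop:online_complexity} gives $\tO(\cdot)$ bounds while the theorem writes $O(\cdot)$; polylog factors cannot literally be absorbed into a constant, so the theorem is implicitly using $O$ in a $\tO$ sense (or would need a hair of slack in the exponents).
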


\bibliography{apm}
\end{document}